\newtheorem{assumption}{\bf Assumption}
\newtheorem{definition}{\bf Definition}
\newtheorem{theorem}{\bf Theorem}
\newtheorem{proposition}{\bf Proposition}
\newtheorem{lemma}{\bf Lemma}
\newtheorem{remark}{\bf Remark}
\DeclareMathOperator*{\argmin}{arg\,min}
\begin{document}

\title{Sequential learning and control:\\Targeted exploration for robust performance}

\author{Janani Venkatasubramanian$^1$, Johannes K\"ohler$^2$, Julian Berberich$^1$, Frank Allg\"ower$^1$
\thanks{$^1$ Janani Venkatasubramanian, Julian Berberich and Frank Allg\"ower are with the Institute for Systems Theory and Automatic Control, University of Stuttgart, 70550 Stuttgart, Germany. (email:\{janani.venkatasubramanian, julian.berberich, frank.allgower\}@ist.uni-stuttgart.de)

$^2$ Johannes K\"ohler is with the Institute for Dynamic Systems and Control, ETH Z\"urich, Z\"urich CH-80092, Switzerland. (email:jkoehle@ethz.ch)}
\thanks{ F. Allg\"ower is thankful that his work was funded by Deutsche Forschungsgemeinschaft (DFG, German Research Foundation) under Germany’s Excellence Strategy - EXC 2075 - 390740016 and under grant 468094890. F. Allg\"ower acknowledges the support by the Stuttgart Center for Simulation Science (SimTech). Janani Venkatasubramanian thanks the International Max Planck Research School for Intelligent Systems (IMPRS-IS) for supporting her.}}

\maketitle

\begin{abstract}
We present a novel dual control strategy for uncertain linear systems based on targeted harmonic exploration and gain-scheduling with performance and excitation guarantees. In the proposed sequential approach, robust control is implemented after exploration with the main feature that the exploration is optimized with respect to the robust control performance. Specifically, we leverage recent results on finite excitation using spectral lines to determine a high probability lower bound on the resultant finite excitation of the exploration data. This provides an a priori upper bound on the remaining model uncertainty after exploration, which can further be leveraged in a gain-scheduling controller design that guarantees robust performance. This leads to a semidefinite program-based design which computes an exploration strategy with finite excitation bounds and minimal energy, and a gain-scheduled controller with probabilistic performance bounds that can be implemented after exploration. The effectiveness of our approach and its benefits over common random exploration strategies are demonstrated with an example of a system which is ‘hard to learn'.

\end{abstract}

\begin{IEEEkeywords}Identification for control, Dual control, Robust control, Uncertain systems
\end{IEEEkeywords}

\section{Introduction}

Control inputs to an uncertain system should have a `directing effect' to control the dynamical system and achieve a certain goal. Furthermore, the inputs should also have a `probing' effect to learn the uncertainty in the system. These two effects are, however, competing and draw attention to the exploration - exploitation trade-off, which is also the subject of contemporary literature on Reinforcement Learning \cite{recht2019tour}. In close association with these effects, simultaneous learning and control of uncertain dynamic systems has garnered research interest with the establishment of the \emph{dual control} paradigm \cite{feldbaum1960dual}. A detailed survey of dual control methods is provided in \cite{filatov2000survey, mesbah2018stochastic}. These methods laid the foundation of balancing exploration with caution.

Early works in this direction involve approximations of stochastic Dynamic Programming (DP), known as implicit dual control methods, to handle  computational intractability problems that accompany stochastic DP \cite{tse1973wide, tse1976actively, bayard1985implicit, bar1976caution}. Another class of dual control, known as explicit dual control methods, use heuristic probing techniques to actively learn the uncertain system \cite{wittenmark1995adaptive}, and are closely related to \emph{Optimal Experiment Design} \cite{geversaa2005identification, hjalmarsson2005experiment}. These methods solve the combined problem of regulating and probing closed-loop system dynamics, and have been further refined to explicitly consider the relevance of the uncertainty in the cost function, yielding application-oriented strategies \cite{annergren2017application, larsson2016application, heirung2017dual}. 
 
A particularly appealing approach to the dual control problem is to sequentially apply some probing input for exploration, and then design a robustly stabilizing feedback based on the gathered data. Recent methods focus on \emph{targeted} exploration, and perform better than methods that use common greedy random exploration \cite{barenthin2008identification, bombois2006least, bombois2021robust, umenberger2019robust, ferizbegovic2019learning, iannelli2019structured}. In particular, these methods consider that exploration should be targeted in the sense that the resulting uncertainty reduction in the model facilitates achieving a control goal and performance objective. Identification strategies proposed in \cite{bombois2006least, bombois2021robust, barenthin2008identification} yield uncertainty sets within required bounds with possibly small experiment cost. In these works, the experiment cost is determined in terms of experiment time, performance degradation due to experimentation, and/or power of the input. In \cite{barenthin2008identification}, identification and robust control are jointly designed. Iterations of identification and robust control design are employed to ensure the uncertain parameters meet a desired accuracy that is required for robust control. As a result, the desired properties for robust control can only be ensured after iterative experiments in \cite{barenthin2008identification}.

Developing on the works in \cite{barenthin2008identification} and \cite{dean2017sample}, a dual control strategy is proposed in \cite{ferizbegovic2019learning} that minimizes the worst-case cost achieved by a robust controller that is synthesized with reduced model uncertainty. However, the uncertainty bounds utilized in \cite{ferizbegovic2019learning} require data to be independent, and hence not applicable to time-series data available through exploration. In contrast, uncertainty bounds on the model parameters directly applicable to correlated time-series data are introduced in \cite{umenberger2019robust}. These bounds are used to design a targeted exploration strategy that excites the system to reduce uncertainty specifically to improve a robust LQR design in \cite{umenberger2019robust}. The approach in \cite{iannelli2019structured} extends the exploration strategy in \cite{umenberger2019robust} to a more realistic finite horizon problem setting that captures the trade-offs between exploration and exploitation better.


The exploration methods proposed in \cite{umenberger2019robust, ferizbegovic2019learning, iannelli2019structured} use a linear state-feedback and an additional Gaussian noise term for exploration. To tractably compute the predicted uncertainty bound associated with the parameter estimates after exploration, the empirical covariance is approximated by the worst-case state covariance. This approximation fails to provide a priori guarantees of excitation on the exploration inputs. While the results of methods in \cite{umenberger2019robust, ferizbegovic2019learning, barenthin2008identification, iannelli2019structured} seem to perform well numerically, they lack the corresponding theoretical performance guarantees. In particular, changes in the mean of uncertain system parameters during the exploration phase are not accounted for in the methods in \cite{umenberger2019robust, ferizbegovic2019learning}. We address these limitations with guaranteed finite excitation in combination with  our recently proposed gain-scheduled controller \cite{venkatasubramanian2020robust}, which accounts for changing parameter estimates and guarantees closed-loop performance.

In particular, we propose harmonic exploration inputs in the form of a linear combination of sinusoids of specific frequencies and optimized amplitudes, and reduce uncertainty in a targeted fashion with the goal of guaranteed control performance. This choice is also supported in literature, where it was established that the robust optimal control input can be expressed with appropriately chosen amplitudes and frequencies of the sinusoids \cite{rojas2007robust}. As one of our main contributions, we derive an \textit{a priori} guaranteed lower bound on the finite excitation of the exploration inputs depending on the spectral content of the inputs \cite{sarker2020parameter}. The lower bound on excitation results in a bound on the uncertain system parameters, which can be leveraged in robust control design to provide performance guarantees. Regarding the application of targeted exploration in dual control, we employ the exploration strategy in our recently proposed gain-scheduling-based dual control approach \cite{venkatasubramanian2020robust}. In particular, we treat the mean of future uncertain system parameters as a scheduling variable. The approach based on gain-scheduling gives rise to a design based on a semidefinite program (SDP) with closed-loop performance guarantees. The resulting controller is a state feedback which depends on the parameter estimates after the exploration phase and thus, on the data collected during this phase. The controller \emph{guarantees} robust closed-loop performance after an initial exploration phase.

%
 
\section{Problem Statement}
\subsubsection*{Notation} 
The transpose of a matrix $A \in \mathbb{R}^{n \times m}$ is denoted by $A^\top$. The conjugate transpose of a matrix $A \in \mathbb{C}^{n \times m}$ is denoted by $A^\mathsf{H}$. The positive definiteness of a matrix $A \in \mathbb{C}^{n \times m}$ is denoted by $A = A^\mathsf{H} \succ 0$. The Kronecker product operator is denoted by $\otimes$. The operator $\mathrm{vec}(A)$ stacks the columns of $A$ to form a vector. The operator $\mathrm{diag}(A_1,\dots,A_n)$ creates a block diagonal matrix by aligning the matrices $A_1,\dots,A_n$ along the diagonal starting with $A_1$ in the upper left corner. The critical value of the Chi-squared distribution with $n$ degrees of freedom and probability $p$ is denoted by $\chi_n^2(p)$. A unit sphere of dimension $d$ is denoted by $\mathcal{S}^{d-1}$. The expected value of a random variable $X$ is denoted by $\mathbb{E}(X)$. The probability of an event $E$ occurring is denoted by $\mathbb{P}(E)$. Given a sequence $\{ x_k\}_{k=0}^{T-1}$, the discrete Fourier transform (DFT) of the sequence is denoted by $\mathbf{x}(e^{j\omega})=\sum_{k=0}^{T-1}x_k e^{-j2\pi k \omega}$ where $\omega \in \Omega_T:=\{0,1/T,\dots,(T-1)/T\}$. For a vector $x\in\mathbb{R}^n$ and a matrix $P \succ 0$, the Euclidean norm is denoted by $\|x\|=\sqrt{x^\top x}$ and $\|x\|_P=\sqrt{x^\top P x}$. For a matrix $A\in\mathbb{C}^{m\times n}$, $\|A\|$ denotes the largest singular value. Furthermore, given a matrix $M \succeq 0$, $\|A\|_M = \| M^{1/2} A \|$. A random variable $X \in \mathbb{R}^d$ that is normally distributed with mean $\mu$ and variance $\Sigma$ is denoted by $X \sim \mathcal{N}(\mu,\Sigma)$. 
%


A random variable $X \in \mathbb{R}$ is said to be sub-Gaussian \cite{rigollet2015high}, with variance proxy $\sigma^2 \in \mathbb{R}$, i.e., $X \sim \mathsf{subG}(\sigma^2)$, if $\mathbb{E}(X)=0$ and its moment generating function satisfies
\begin{equation}
\mathbb{E}\left(e^{sX}\right) \leq e^{\left( \frac{\sigma^2 s^2}{2} \right)},\quad \forall s \in \mathbb{R}.
\end{equation}

A random vector $X \in \mathbb{R}^d$ is said to be sub-Gaussian with variance proxy $\sigma^2$, i.e., $X \sim \mathsf{subG}(\sigma^2)$, if $\mathbb{E}[X]=0$ and $u^\top X$ is sub-Gaussian with variance proxy $\sigma^2$ for any unit vector $u \in \mathcal{S}^{d-1}$.

\subsection{Setting}
Consider a discrete-time linear time-invariant dynamical system of the form \begin{equation}\label{sys}
x_{k+1}=A_{\mathrm{tr}}x_k+B_{\mathrm{tr}}u_k+w_k,\quad w_k \overset{\text{i.i.d.}}{\sim} \mathcal{N}(0,\sigma_\mathrm{w}^2I_{n_\mathrm{x}})
\end{equation}
where $x_k \in \mathbb{R}^{n_\mathrm{x}}$ is the state, $u_k \in \mathbb{R}^{n_\mathrm{u}}$ is the control input, and $w_k \in \mathbb{R}^{n_\mathrm{x}}$ is the normally distributed process noise. It is assumed that the realizations of $w_k$ are independent and identically distributed (i.i.d.) with zero mean and known variance $\sigma_\mathrm{w}^2I_{n_\mathrm{x}}$, and the state $x_k$ is directly measurable, allowing for the application of standard identification results\footnote{In case $w \sim \mathcal{N}(0,\Sigma)$ where $\Sigma$ is a known positive definite matrix, a `whitening' transformation may be applied such that in the transformed state space the covariance matrix of the disturbance is an identity matrix.} \cite{ljung1999sysid}. The true values of the system parameters $A_{\mathrm{tr}}$, $B_{\mathrm{tr}}$, are initially uncertain, and there is a need to gather informative data to improve the accuracy of the parameters.

\subsubsection*{Control goal}
The overarching goal of the proposed dual control strategy is to design a stabilizing state-feedback controller $u_k=K x_k$ which ensures that the closed-loop system is stable while also satisfying some desired performance specifications with high probability \cite{scherer2001lpv, veenman2014synthesis}. Consider the output $z_k \in \mathbb{R}^{n_\mathrm{z}}$ that depends on the state:
\begin{equation}\label{gz}
z_k=Cx_k,
\end{equation}
where $C$ is known. Note that $z_k$ is the closed-loop trajectory resulting from the application of the input $u_k=K x_k$.
 In the presence of white noise input signals, the $H_2$-norm provides a suitable stochastic interpretation in terms of the asymptotic output variance of the closed-loop system, i.e., $\lim_{k\rightarrow \infty} \mathbb{E} (\|z_k\|^2)$ \cite{paganini2000linear}.
\begin{definition}\label{def:h2perf}($H_2$ performance) The closed-loop system \eqref{sys} with $u_k=Kx_k$ and initial state $x_0=0$ achieves the $H_2$ performance bound $\gamma_\mathrm{p}$ if 
\begin{align}\label{eq:h2perf}
    \lim_{k\rightarrow \infty} \mathbb{E} (\|z_k\|^2) < \gamma_\mathrm{p}^2 \sigma_\mathrm{w}^2.
\end{align}
\end{definition}
Due to the uncertain dynamics, we cannot directly design a linear feedback $u_k=Kx_k$ that systematically accounts for the change in the mean estimate after exploration and achieves a desired $H_2$ performance bound $\gamma_\mathrm{p}$ (Def. \ref{def:h2perf}). Instead, we propose a sequential dual control approach wherein a targeted exploration strategy is implemented first, which is followed by the implementation of a parametrized state-feedback achieving \eqref{eq:h2perf}. To this end, we first provide preliminaries regarding uncertainty bounds for parameter estimation based on time-series data and spectral information in Section \ref{sec:prelim}. As one of the main results, we derive the exploration strategy, and the corresponding uncertainty bound on the data obtained during exploration in Section \ref{sec:exploration}. The primary challenge is to encapsulate the dual effect of the performance improvement through the process of exploration, and to tailor the exploration in a manner that is pertinent to performance improvement. Therefore, we simultaneously design the targeted exploration strategy and a parametrized state-feedback controller for the system in \eqref{sys} in dependence of the future parameters/model estimate. The new parameter estimate, which influences the state-feedback control law $K$, is interpreted as a scheduling variable using tools from gain-scheduling control \cite{venkatasubramanian2020robust}. The gain-scheduling controller is presented in  Section \ref{sec:gain_scheduling_h2}. Consequently, we solve a joint dual control problem of obtaining the exploration strategy and the state-feedback controller to be implemented after exploration without re-design, as elaborated in Section \ref{sec:dual_control}. A sketch of the overall dual control strategy is provided in Fig. \ref{fig:timeline}. The design process ensures that, by applying the feedback after the phase of exploration, $H_2$ performance specifications are guaranteed with high probability.

\begin{figure}[hbtp]
\begin{center}
\includegraphics[width=0.48\textwidth]{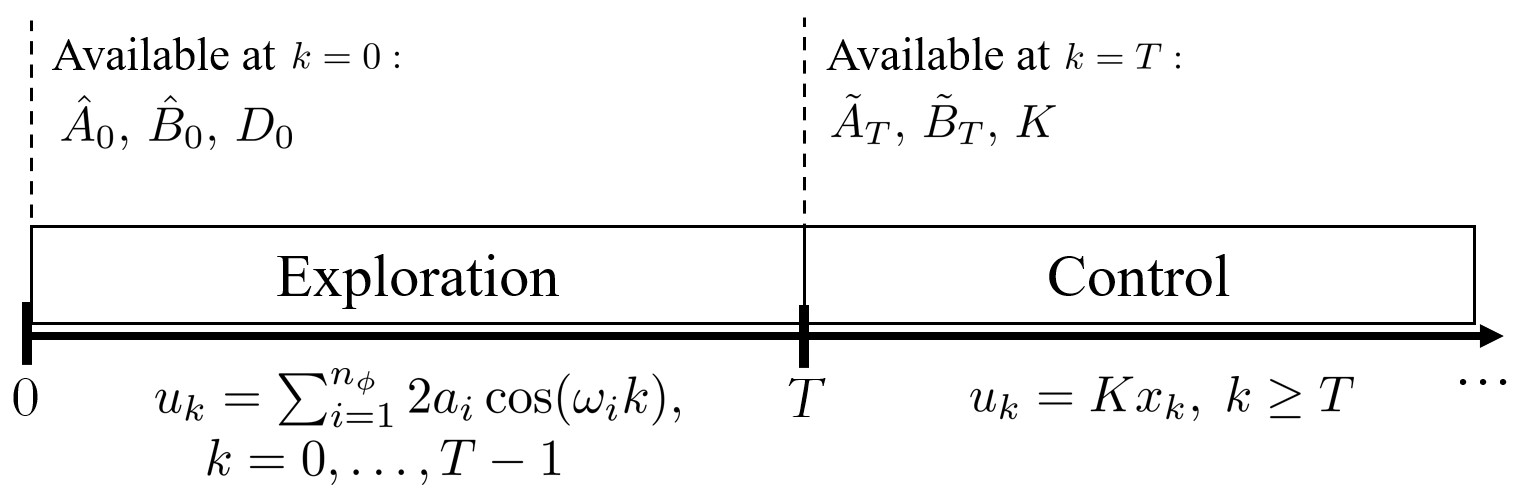}
\end{center}
\caption{Sketch of the sequential robust dual control strategy.}
\label{fig:timeline}
\end{figure}

\section{Preliminaries}\label{sec:prelim}
This section discusses prior results that provide a data-dependent uncertainty bound on the parameter estimates, and the required preliminaries regarding the theory of spectral lines, from \cite{umenberger2019robust} and \cite{sarker2020parameter}, respectively.

\subsection{Uncertainty bound}\label{sec:uncertainty_bound}
Given observed data $\mathcal{D}_T=\{x_k,u_k\}_{k=0}^{T-1}$ of length $T$, we are interested in quantifying the uncertainty associated with unknown parameters $A_{\mathrm{tr}}$ and $B_{\mathrm{tr}}$. Henceforth, we denote $\phi_k = \begin{bmatrix}x_k^\top & u_k^\top \end{bmatrix}^\top \in \mathbb{R}^{n_\mathrm{\phi}}$ where $n_\mathrm{\phi}=n_\mathrm{x} + n_\mathrm{u}$. In our setting, we assume that some prior knowledge on the dynamics is available.
\begin{assumption} \label{a1} The parameters $\theta=\mathrm{vec}(A,B)$ have a Gaussian prior, i.e., $\theta \sim  \mathcal{N}(\hat{\theta}_{\mathrm{prior}},\Sigma_{\theta, \mathrm{prior}})$. Furthermore, there exists a matrix $\tilde{D}_0 \succ 0$ such that $\Sigma_{\theta, \mathrm{prior}}^{-1}=\tilde{D}_0 \otimes I_{n_\mathrm{x}}$.
\end{assumption} 

The estimate $\hat{\theta}_T=\text{vec}([\hat{A}_T,\hat{B}_T])$ is the maximum a posteriori (MAP) estimate of the unknown parameters $A_{\mathrm{tr}}$ and $B_{\mathrm{tr}}$, and is computed as:

\begin{equation}
\begin{split}
\label{eq:LMS}
\hat{\theta}_T= \argmin_{\theta} \sum_{k=0}^{T-1}& \tfrac{1}{\sigma_\mathrm{w}^2} \left\|( x_{k+1}-\left(\left[x_k^\top \; u_k^\top\right]\otimes I_{n_\mathrm{x}}\right) \theta)\right\|^2\\
&+\left\|\theta-\hat{\theta}_{\mathrm{prior}}\right\|_{\Sigma_{\theta,\mathrm{prior}}^{-1}}^2
\end{split}
\end{equation}
with posterior covariance
\begin{equation}\label{eq:post_cov}
\Sigma_{\theta,\mathrm{post}}^{-1}=\Sigma_{\theta, \mathrm{prior}}^{-1}+\left(\tfrac{1}{\sigma_\mathrm{w}^2}\sum_{k=0}^{T-1}\phi_k\phi_k^\top\right) \otimes I_{n_\mathrm{x}}.
\end{equation}
Under Assumption \ref{a1}, the posterior distribution $p(\theta|\mathcal{D})$ is given by $\mathcal{N}(\hat{\theta}_T,\Sigma_{\theta,\mathrm{post}})$ \cite[Prop. 2.1]{umenberger2019robust}.

\begin{remark}\label{rem:rem1} In case a prior as in Assumption \ref{a1} is not available, it can also be inferred from data. More precisely, given a data set $\mathcal{D}_0=\{\phi_k \}_{k=-\bar{T}}^{-1}$ obtained from a randomly exciting input, and a uniform prior over the parameters $\theta=\mathrm{vec}([A,B])$, i.e., $p(\theta)~\propto~1$, the posterior distribution is given by $\mathcal{N}(\hat{\theta},\Sigma_\theta)$, where $\hat{\theta}=\mathrm{vec}([\hat{A},\hat{B}])$ is the ordinary least squares estimate, and $\Sigma_\theta^{-1}=\left(\frac{1}{\sigma_\mathrm{w}^2} \sum_{k=-\bar{T}}^{-1}\phi_k \phi_k^\top\right) \otimes I_{n_\mathrm{x}}=\tilde{D}_0 \otimes I_{n_\mathrm{x}}.$ This also justifies the structural assumption on $\Sigma_{\theta, \mathrm{prior}}^{-1}$ being of the form $\tilde{D}_0 \otimes I_{n_\mathrm{x}}$ for some $\tilde{D}_0 \succ 0$.
\end{remark}

 The following lemma provides high-probability credibility regions for the uncertain parameters $\theta$, and the uncertain system matrices $A$, $B$. Throughout this paper, we make the standing assumption that the length $T\rightarrow\infty$, such that the following asymptotic identification bounds can be applied.

\begin{lemma}\cite[Prop. 2.1, Lem. 3.1]{umenberger2019robust} \label{lem1}Let Assumption~\ref{a1} hold. Given data set  $\mathcal{D}_T$ of length $T$ with estimate $\hat{\theta}_T=\text{vec}([\hat{A}_T,\hat{B}_T])$ (cf. \eqref{eq:LMS}), set $D_{T}= \frac{1}{c_\delta \sigma_\mathrm{w}^2} \sum_{k=0}^{T-1}\phi_k \phi_k^\top$, $D_0=\frac{1}{c_\delta}\tilde{D}_0$,  $c_\delta=\chi_{n_\mathrm{x} n_\mathrm{\phi}}^{2}(\delta)$ with $0~<~\delta~<~1$. Then,
\begin{itemize}
\item[{\textbf{I.}}] $\mathbb{P}(\theta_{\mathrm{tr}}=\mathrm{vec}([A_{\mathrm{tr}},B_{\mathrm{tr}}]) \in \mathbf{\Theta}) = 1-\delta$, where
\begin{equation}\label{eq:lem1_D}
\mathbf{\Theta}:=\left\{ \theta:(\theta-\hat{\theta}_T)^\top \Sigma_{\theta,\mathrm{post}}^{-1} (\theta-\hat{\theta}_T) \leq c_\delta \right\}
\end{equation}
with $\Sigma_{\theta,\mathrm{post}}^{-1}=\Sigma_{\theta,\mathrm{prior}}^{-1}+\left(c_\delta D_T \right)\otimes I_{n_\mathrm{x}} $, and

\item[{\textbf{II.}}] $\mathbb{P}([A_{\mathrm{tr}},B_{\mathrm{tr}}] \in \mathbf{\Delta}) = 1-\delta$, where
\begin{small}
\begin{equation}\label{credregion}
\mathbf{\Delta}:=\left\{A,B:\begin{bmatrix}
(\hat{A}_T-A)^\top\\(\hat{B}_T-B)^\top
\end{bmatrix}^\top D_\mathrm{post} \begin{bmatrix}
(\hat{A}_T-A)^\top\\(\hat{B}_T-B)^\top
\end{bmatrix} \preceq I\right\}
\end{equation}
\end{small}
with $D_\mathrm{post}=D_0+D_T$ .
\end{itemize}
\end{lemma}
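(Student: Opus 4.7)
The plan is to leverage the conjugacy induced by Assumption~\ref{a1}: the Gaussian prior on $\theta$ combined with the Gaussian likelihood from~\eqref{sys} yields a Gaussian posterior, whose Mahalanobis distance is a standard chi-squared variable. Part~2 will then follow from part~1 by exploiting the Kronecker structure of the precision and the elementary inequality that the trace dominates the spectral radius on the positive semidefinite cone.

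For part~1, I would first identify the objective in~\eqref{eq:LMS} with (up to an additive constant in $\theta$) the negative log-posterior $-\log p(\theta\mid\mathcal{D}_T)$. Completing the square in $\theta$ shows that $\hat{\theta}_T$ is simultaneously the MAP and the posterior mean, and reads off the posterior precision $\Sigma_{\theta,\mathrm{post}}^{-1}=\Sigma_{\theta,\mathrm{prior}}^{-1}+\tfrac{1}{\sigma_\mathrm{w}^2}\bigl(\sum_{k=0}^{T-1}\phi_k\phi_k^\top\bigr)\otimes I_{n_\mathrm{x}}$, which matches the stated form once $D_T$ and $c_\delta$ are substituted. Since $\theta\mid\mathcal{D}_T\sim\mathcal{N}(\hat{\theta}_T,\Sigma_{\theta,\mathrm{post}})$ and $\theta$ has $n_\mathrm{x}n_\mathrm{\phi}$ components, the quadratic form $(\theta-\hat{\theta}_T)^\top\Sigma_{\theta,\mathrm{post}}^{-1}(\theta-\hat{\theta}_T)$ is $\chi^2_{n_\mathrm{x}n_\mathrm{\phi}}$-distributed, so by the definition of $c_\delta$ as the corresponding $(1-\delta)$-quantile the ellipsoidal inclusion $\theta_{\mathrm{tr}}\in\mathbf{\Theta}$ holds with probability at least $1-\delta$.

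For part~2, the strategy is to rewrite the ellipsoid as a matrix inequality. Abbreviating $M:=[\hat{A}_T-A,\;\hat{B}_T-B]\in\mathbb{R}^{n_\mathrm{x}\times n_\mathrm{\phi}}$ so that $\theta-\hat{\theta}_T=-\mathrm{vec}(M)$, the Kronecker identity $\mathrm{vec}(M)^\top (D_\mathrm{post}\otimes I_{n_\mathrm{x}})\mathrm{vec}(M)=\mathrm{tr}(MD_\mathrm{post} M^\top)$ together with $\Sigma_{\theta,\mathrm{post}}^{-1}=c_\delta D_\mathrm{post}\otimes I_{n_\mathrm{x}}$ converts the bound in~\eqref{eq:lem1_D} into $\mathrm{tr}(MD_\mathrm{post} M^\top)\leq 1$. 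Since $MD_\mathrm{post} M^\top\succeq 0$, its largest eigenvalue is at most its trace, hence $MD_\mathrm{post} M^\top\preceq I_{n_\mathrm{x}}$. Recognising $M^\top$ as precisely the stacked block appearing in~\eqref{credregion} gives exactly the matrix inequality defining $\mathbf{\Delta}$, and because this event is implied by $\theta_{\mathrm{tr}}\in\mathbf{\Theta}$ it inherits the same $1-\delta$ probability lower bound.

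The main obstacle will be bookkeeping rather than any deep probabilistic argument: keeping the $\mathrm{vec}$ and Kronecker conventions consistent so that the quadratic form collapses cleanly to $\mathrm{tr}(MD_\mathrm{post} M^\top)$ without spurious transposes, and verifying that the stacked $n_\mathrm{\phi}\times n_\mathrm{x}$ matrix in~\eqref{credregion} lines up with $M^\top$ so that $D_\mathrm{post}$ is sandwiched on the correct side.
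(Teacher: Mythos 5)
Your proposal is correct and follows essentially the same route as the source the paper cites for this result (the paper itself gives no in-text proof, deferring to \cite[Prop.~2.1, Lem.~3.1]{umenberger2019robust}): Gaussian conjugacy gives the posterior $\mathcal{N}(\hat{\theta}_T,\Sigma_{\theta,\mathrm{post}})$ so the Mahalanobis form is exactly $\chi^2_{n_\mathrm{x}n_\mathrm{\phi}}$, and the identity $\mathrm{vec}(M)^\top(D_\mathrm{post}\otimes I_{n_\mathrm{x}})\mathrm{vec}(M)=\mathrm{tr}(MD_\mathrm{post}M^\top)$ together with $\lambda_{\max}\leq\mathrm{tr}$ on the PSD cone converts the ellipsoid into the matrix inequality of \eqref{credregion}. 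Your bookkeeping also checks out, including the cancellation $\Sigma_{\theta,\mathrm{post}}^{-1}=c_\delta D_\mathrm{post}\otimes I_{n_\mathrm{x}}$ that removes $c_\delta$ from the trace bound.
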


 The result of Lemma~\ref{lem1} is a data-dependent uncertainty bound that can be utilized to synthesize robust controllers similar to approaches in \cite{umenberger2019robust, ferizbegovic2019learning, iannelli2019structured, venkatasubramanian2020robust}.
 Note that statement $\textbf{I}$ implies statement $\textbf{II}$ of Lemma \ref{lem1} \cite{umenberger2019robust}.  In Section \ref{sec:exploration}, we use this result to derive a suitable targeted exploration strategy.

%

%
Initial estimates of the system parameters can be obtained from the mean of the prior distribution via $\text{vec}([\hat{A}_0,\hat{B}_0])=\hat{\theta}_{\mathrm{prior}}$. The matrix $D_0$ quantifies the robust bound associated with these initial estimates for a given probability $1-\delta$. More precisely, from Lemma \ref{lem1}, $\theta_\mathrm{tr} \in \mathbf{\Theta}_0$ with probability $1-\delta$, where
\begin{align}\label{eq:Theta0}
\mathbf{\Theta}_0:=\left\{\theta:(\hat{\theta}_{\mathrm{prior}}-\theta)^\top (D_0 \otimes I_{n_\mathrm{x}}) (\hat{\theta}_{\mathrm{prior}}-\theta) \leq 1 \right\},
\end{align}
and $[A_\mathrm{tr},B_\mathrm{tr}] \in \mathbf{\Delta}_0$ with probability $1-\delta$ where
\begin{equation}\label{eq:Delta0}
\mathbf{\Delta}_0:=\left\{A,B:\begin{bmatrix}
(\hat{A}_0-A)^\top\\(\hat{B}_0-B)^\top
\end{bmatrix}^\top D_0 \begin{bmatrix}
(\hat{A}_0-A)^\top\\(\hat{B}_0-B)^\top
\end{bmatrix} \preceq I\right\}.
\end{equation}
Denote 
\begin{align}\label{eq:prior_ub}
\Delta_0=\begin{bmatrix}
A_\mathrm{tr}-\hat{A}_0 & B_\mathrm{tr} - \hat{B}_0
\end{bmatrix}.
\end{align}
By applying the Schur complement twice on \eqref{eq:Delta0}, Lemma \ref{lem1} implies that
\begin{align}\label{eq:prior_ub_prob}
\mathbb{P}(\Delta_0^\top \Delta_0 \preceq D_0^{-1}) = 1-\delta.
\end{align}
 
Through the targeted exploration process for $T$ time steps as elaborated in Section \ref{sec:exploration}, data $\mathcal{D}_T=\{\phi_k\}_{k=0}^{T-1}$ will be observed. The new estimates $\hat{A}_T$ and $\hat{B}_T$ will be computed from data $\mathcal{D}_T$ and the prior information as in \eqref{eq:LMS}, and made available at time $T$. The matrix $D_\mathrm{post} \coloneqq D_0+\frac{1}{c_\delta \sigma_\mathrm{w}^2}\sum_{k=0}^{T-1}\phi_k \phi_k^\top$ quantifies the uncertainty associated with the estimates $\hat{A}_T$ and $\hat{B}_T$ (cf. Lemma \ref{lem1}). 

\subsection{Frequency domain information using spectral lines}
\label{sec:prelim_spectral}
This subsection discusses preliminaries for finite excitation based on the theory of spectral lines, which deals with the analysis of frequency domain information that can be derived from time-series data \cite{sarker2020parameter}. The data-dependent uncertainty bounds in Lemma~\ref{lem1} are based on the matrix $D_{T}$, which is a quantitative measure of finite excitation \cite[Def. 3]{sarker2020parameter}. The following notion of a sub-Gaussian spectral line is introduced.

\begin{definition}{(Sub-Gaussian Spectral Line \cite{sarker2020parameter})} \label{def:subg_spectral line}A stochastic sequence $\{\phi_k\}_{k=0}^{T-1}$ is said to have a sub-Gaussian spectral line from $k=0$ to $T-1$ at a frequency $\omega_0 \in \Omega_T$ with amplitude $\bar{\phi}(\omega_0) \in \mathbb{C}^{n_\phi}$ and a radius $R>0$ if
\begin{equation}
\frac{1}{T}\sum_{k=0}^{T-1}\phi_k e^{-j2\pi \omega_0 k} - \bar{\phi}(\omega_0) \sim \mathsf{subG}(R^2/T).
\end{equation}
\end{definition}

If noise is neglected, we recover a deterministic frequency component
\begin{equation}
\bar{\phi}(\omega_0)=\frac{1}{T}\sum_{k=0}^{T-1}\phi_k e^{-j2\pi \omega_0 k}.
\end{equation}
The radius $R$ in Definition \ref{def:subg_spectral line} is a measure of the stochastic process noise. The notion of a sub-Gaussian spectral line induces a requirement of appropriate behaviour over finite time. If the input has sufficiently many spectral lines, then the input signal is finitely exciting and can be used to provide bounds for parameter estimation. In order to establish the relationship between the spectral content of the input signal and finite excitation, the expected information matrix is first defined as follows.

\begin{definition}{(Expected Information Matrix \cite{sarker2020parameter})} \label{def:info_matrix} Given a sequence $\{\phi_k\}_{k=0}^{T-1}$ with $L$ sub-Gaussian spectral lines at frequencies $\omega_i \in \Omega_T,\,i=1,\dots,L$ with amplitudes
$\{\bar{\phi}(\omega_i)\}_{i=1}^{L}$, the information matrix $\bar{\Phi} \in \mathbb{C}^{n_\phi \times L}$ is defined as
\begin{equation}\label{eq:info_matrix}
\bar{\Phi}=\begin{bmatrix}
\vrule & \ldots & \vrule \\
\bar{\phi}(\omega_1) & \ldots & \bar{\phi}(\omega_L)\\
\vrule & \ldots & \vrule
\end{bmatrix}.
\end{equation}
\end{definition}

The spectral content in the expected information matrix can be used to determine whether a signal is finitely exciting or not \cite{sarker2020parameter}. In deterministic system identification, estimation of unknown parameters is made possible if $\bar{\Phi}$ has full rank and is numerically well conditioned. Since $\omega_i \in \Omega_T, \, i=1,\dots,L$, we set $T \geq L$ and select $L$ frequencies from $\Omega_T$.

\section{Targeted Exploration}\label{sec:exploration}
In this section, we propose a targeted exploration strategy based on a derived uncertainty bound on the data obtained through the process of exploration. Unlike greedy random exploration \cite{umenberger2019robust, ferizbegovic2019learning, iannelli2019structured, venkatasubramanian2020robust}, we introduce a targeted exploration strategy in the form of a linear combination of sinusoids with specified frequencies that explicitly shape the model uncertainty. In particular, this is achieved by deriving a lower bound on finite excitation of the exploration data using the spectral information of the exploration inputs in Lemma \ref{lemma:uncertainty bound spectral}. Since this bound is non-convex in the decision variables and depends on uncertain model parameters, a convex relaxation procedure is carried out (cf. Section \ref{subsec:convex_rel}), and required bounds on the effect of the model uncertainty are derived (cf. Section \ref{subsec:hinf_bound}). Finally, in Section \ref{subsec:exp_lowerbound}, we obtain an LMI for exploration which provides us with exploration inputs that guarantee a lower-bound on the finite excitation. In Section \ref{sec:dual_control}, the results of this section will be combined  with a robust (gain-scheduled) control design to achieve a guaranteed performance after exploration.

\subsection{Exploration strategy}
The exploration input sequence takes the form
\begin{align}\label{eq:exploration_controller}
u_k=\sum_{i=1}^{L} a_i \cos(2 \pi \omega_i k), \quad\, k=0,\dots,T-1
\end{align}
where $T$ is the exploration time and $a_i \in \mathbb{R}^{n_\mathrm{u} \times 1}$ are the amplitudes of the sinusoidal inputs at $L$ distinctly selected frequencies $\omega_i \in \Omega_T$. Since the input signal is deterministic and sinusoidal, the amplitude of the spectral line is $\overline{u}(\omega_i)=a_i$, and the radius of the spectral line is $0$ (cf. Def. \ref{def:subg_spectral line}).
Denote $U_\mathrm{e}=\mathrm{diag}(a_1,\dots,a_{L}) \in \mathbb{R}^{Ln_\mathrm{u} \times L}$. The exploration input is computed such that it excites the system sufficiently with minimal control energy, based on the initial parameter estimates. To this end, we require that the control energy at each time instant does not exceed $\gamma^2_\mathrm{e}$, i.e., $\sum_{i=1}^L \|a_i\|^2= \mathbf{1}_{L}^\top U_\mathrm{e}^\top U_\mathrm{e} \mathbf{1}_{L} \preceq \gamma_\mathrm{e}^2$ where $\mathbf{1}_{L} \in \mathbb{R}^{L\times 1}$ is a vector of ones, and the bound $\gamma_\mathrm{e} \geq 0$ is desired to be small. Using the Schur complement, this criterion is equivalent to 
\begin{equation}\label{eq:min_energy_cost}
S_{\textnormal{energy-bound}}(\gamma_\mathrm{e},U_\mathrm{e})\coloneqq \begin{bmatrix} 
\gamma_\mathrm{e} & \mathbf{1}_{L}^\top U_\mathrm{e}^\top \\ U_\mathrm{e} \mathbf{1}_{L} & \gamma_\mathrm{e} I
\end{bmatrix} \succeq 0.
\end{equation}

Since we consider only open-loop inputs in our exploration strategy, we require $A_\mathrm{tr}$ to be Schur stable.

\begin{remark} \label{rem:2}An exploration input of the form in \eqref{eq:exploration_controller} with an additional linear feedback, i.e., $v_k=u_k+Kx_k$, which robustly stabilizes the initial estimate and prior uncertainty, may be considered if it is not known whether the system is Schur stable.
\end{remark}

\subsection{Bound on finite excitation}\label{sec:param_bounds}
For the system evolving under the exploration input as given in (\ref{eq:exploration_controller}), the uncertainty bound on the parameters can be computed from the expected information matrix $\bar{\Phi}$ of the input. As a prerequisite to computing the uncertainty bound, it is necessary to establish the relationship between the spectral content of the observed state $x_k$ and the applied input $u_k$. At a frequency $\omega_i$, the relation between $\phi_k$, $u_k$ and $w_k$ is

\begin{equation}\label{eq:tf}
\begin{split}
\bm{\phi}(e^{j \omega_i}) = &\underbrace{\begin{bmatrix}
(e^{j \omega_i}I - A_\mathrm{tr})^{-1}B_\mathrm{tr} \\ I_{n_\mathrm{u}}
\end{bmatrix}}_{V_i}\mathbf{u}(e^{j \omega_i})\\
& + \underbrace{\begin{bmatrix}
(e^{j \omega_i}I - A_\mathrm{tr})^{-1} \\ 0
\end{bmatrix}}_{Y_i}\mathbf{w}(e^{j \omega_i}).
\end{split}
\end{equation}

Given $u_k$ as in \eqref{eq:exploration_controller} and using \cite[Lemma 1]{sarker2020parameter},
$\phi_k$ has $L$ sub-Gaussian spectral lines (cf. Def. \ref{def:subg_spectral line}) from $0$ to $T-1$ at distinct frequencies $\omega_i \in \Omega_T$, $i=1,\dots,L$ with amplitudes
\begin{equation}
\bar{\phi}(\omega_i)=V_i \bar{u}(\omega_i) 
\end{equation}
and radii $\left\|Y_i \right\|\sigma_\mathrm{w}$. 
Denote 
\begin{align}
V_\mathrm{tr}=[V_1,\cdots,V_{L}] \in \mathbb{C}^{n_\mathrm{\phi}\times L n_\mathrm{u}},
\end{align}
which is unknown since the true dynamics $A_\mathrm{tr}$, $B_\mathrm{tr}$ are unknown. Then, the expected information matrix $\bar{\Phi} \in \mathbb{C}^{n_\mathrm{\phi} \times L}$(cf. Def. \ref{def:info_matrix}) is
\begin{equation}\label{eq:inf_matrix}
\bar{\Phi}=V_\mathrm{tr}U_\mathrm{e}.
\end{equation}
Denote
\begin{align}
\nonumber
Y_\mathrm{tr}&=[Y_1,\cdots,Y_{L}]\in \mathbb{C}^{n_\phi \times L n_\mathrm{x}},\\
W&=\mathrm{diag}(\bar{w}(\omega_1),\dots,\bar{w}(\omega_L)) \in \mathbb{C}^{L n_\mathrm{x}\times L} 
\end{align}
where $\bar{w}(\omega_i)=\frac{\mathbf{w}(e^{j \omega_i})}{T}$, and
\begin{equation}\label{eq:wtilde}
\tilde{W}=Y_\mathrm{tr} W.
\end{equation}
Note that
\begin{align}\label{eq:clar_phibar}
    \frac{1}{T}\bm{\phi}(e^{j \omega_i})=\bar{\phi}(\omega_i)+Y_i \bar{w}(\omega_i).
\end{align}
The effect of the noise $w$ is captured by $W$. Each block $\bar{w}(\omega_i),\,i=1,...,L,$ on the diagonal of $W$ is Gaussian, i.e., $\bar{w}(\omega_i) \sim \mathcal{N}(0,\sigma_{\bar{\mathrm{w}}}^2 I)$ with $\sigma_{\bar{\mathrm{w}}}^2=\frac{\sigma_\mathrm{w}^2}{T}$, since the DFT of a Gaussian signal is also Gaussian, but with a distinct variance (cf. Appendix \ref{appendix:noise_radius}).
The following lemma, inspired by \cite[Proposition 3]{sarker2020parameter}\footnote{The penultimate step in the proof of \cite[Prop. 3, Appendix D]{sarker2020parameter} results in $\|\bar{\Phi}+\tilde{W}\|^{2} \geq \| \bar{\Phi}^{-1}\|^{-2} - \| \tilde{W}\|^2$, which is incorrect, in general. The authors of \cite{sarker2020parameter} have prepared a corrigendum \cite{sarker2023corrig} that avoids these arguments in the proof of \cite[Prop. 3, Appendix D]{sarker2020parameter}.}, presents a clear relationship between the spectral content of the signal and finite excitation.
\begin{lemma}{}\label{lemma:uncertainty bound spectral} For any $\epsilon \in (0,1)$, $\phi_k$ satisfies

\begin{equation}\label{eq:lem3}
\frac{1}{T}\sum_{k=0}^{T-1}\phi_k \phi_k^\top \succeq \frac{1}{L} \left(  (1-\epsilon) \bar{\Phi} \bar{\Phi}^\mathsf{H} + \left(\frac{\epsilon-1}{\epsilon}\right) \tilde{W} \tilde{W}^\mathsf{H}  \right).
\end{equation}
\end{lemma}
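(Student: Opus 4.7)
My plan is to move to the frequency domain via Parseval's identity, keep only the $n_\phi$ excited frequencies, and then split the resulting spectral content into a deterministic signal part built from $\bar{\Phi}$ and a noise part built from $\tilde{W}$ using a matrix form of Young's inequality parameterized by $\epsilon\in(0,1)$.

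\textbf{Main argument.} First, Parseval's identity for the DFT convention of this paper gives
\begin{equation*}
\sum_{k=0}^{T-1}\phi_k\phi_k^\top \;=\; \frac{1}{T}\sum_{\omega\in\Omega_T}\bm{\phi}(e^{j\omega})\bm{\phi}(e^{j\omega})^{\mathsf{H}}.
\end{equation*}
Each summand on the right is Hermitian positive semidefinite, so dropping every frequency outside the chosen set $\{\omega_i\}_{i=1}^{n_\phi}$ preserves the $\succeq$. Applying \eqref{eq:tf} at each selected $\omega_i$ and using that the input is a pure sinusoid with amplitude $\bar{u}(\omega_i)=a_i$ and vanishing spectral-line radius, the stacked spectral content of $\phi_k$ factors, after the appropriate $1/T$ normalization, as $\bar{\Phi}+\tilde{W}$, where $\bar{\Phi}=V_\mathrm{tr}U_\mathrm{e}$ collects the noise-free transfer contributions and $\tilde{W}=Y_\mathrm{tr}W$ collects the noise contributions. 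This reduces the task to a lower bound on $(\bar{\Phi}+\tilde{W})(\bar{\Phi}+\tilde{W})^{\mathsf{H}}$, for which I would invoke the matrix Young inequality
\begin{equation*}
(X+Y)(X+Y)^{\mathsf{H}} \;\succeq\; (1-\epsilon)XX^{\mathsf{H}} + \tfrac{\epsilon-1}{\epsilon}YY^{\mathsf{H}},\qquad \epsilon\in(0,1),
\end{equation*}
which follows by expanding $(\sqrt{\epsilon}X-\epsilon^{-1/2}Y)(\sqrt{\epsilon}X-\epsilon^{-1/2}Y)^{\mathsf{H}}\succeq 0$. Specializing to $X=\bar{\Phi}$, $Y=\tilde{W}$ produces a clean split of the exploration signal and the stochastic noise contribution with the tunable trade-off $\epsilon$, reproducing exactly the sign pattern and the coefficients $(1-\epsilon)$ and $(\epsilon-1)/\epsilon$ appearing in the statement.

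\textbf{Main obstacle.} The subtle step is reconciling the $\bar{\Phi}\bar{\Phi}^{\mathsf{H}}$ and $\tilde{W}\tilde{W}^{\mathsf{H}}$ forms produced by the Young bound above with the $\bar{\Phi}^{\mathsf{H}}\bar{\Phi}$ and $\tilde{W}^{\mathsf{H}}\tilde{W}$ forms in the statement, together with the overall $\tfrac{1}{n_\phi}$ prefactor. Both $\bar{\Phi}$ and $\tilde{W}$ are square $n_\phi\times n_\phi$ matrices, so $MM^{\mathsf{H}}$ and $M^{\mathsf{H}}M$ share their spectra; chaining $MM^{\mathsf{H}}\succeq\lambda_{\min}(M^{\mathsf{H}}M)I$ with $M^{\mathsf{H}}M\preceq\mathrm{tr}(M^{\mathsf{H}}M)I\le n_\phi\,\lambda_{\max}(M^{\mathsf{H}}M)I$ is the most natural route by which the $\tfrac{1}{n_\phi}$ factor would enter. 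Carrying this through so that the combined signal-minus-noise inequality survives the transition as a single PSD bound, rather than losing the $\epsilon$-coupling by applying the $MM^{\mathsf{H}}\leftrightarrow M^{\mathsf{H}}M$ conversion term by term, is where I expect the proof to require the most care.
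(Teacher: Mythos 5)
Your frequency-domain setup is sound as far as it goes: Parseval with the paper's DFT convention, dropping the positive-semidefinite contributions at the non-excited frequencies, and the matrix Young inequality with exactly the coefficients $(1-\epsilon)$ and $(\epsilon-1)/\epsilon$ are all correct, and your Parseval step is in fact sharper than what the paper uses, since it yields $\frac{1}{T}\sum_{k=0}^{T-1}\phi_k\phi_k^\top \succeq MM^{\mathsf{H}}$ with $M=\bar{\Phi}+\tilde{W}$ and no $1/n_\phi$ loss. The genuine gap is precisely the step you flag as the main obstacle, and it is not a matter of care but of impossibility: there is no positive-semidefinite-order conversion from $MM^{\mathsf{H}}$ to $\frac{1}{n_\phi}M^{\mathsf{H}}M$. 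Although $MM^{\mathsf{H}}$ and $M^{\mathsf{H}}M$ are isospectral for square $M$, the relation $MM^{\mathsf{H}}\succeq c\,M^{\mathsf{H}}M$ fails for \emph{every} $c>0$: with $M=e_1e_2^\top$ (standard unit vectors) one has $MM^{\mathsf{H}}=e_1e_1^\top$ and $M^{\mathsf{H}}M=e_2e_2^\top$, and testing with $z=e_2$ gives $0\geq c$. Your proposed chain $MM^{\mathsf{H}}\succeq\lambda_{\min}(M^{\mathsf{H}}M)I$ combined with $M^{\mathsf{H}}M\preceq\lambda_{\max}(M^{\mathsf{H}}M)I$ only delivers the factor $\lambda_{\min}/\lambda_{\max}$, an inverse squared condition number, which is not bounded below by $1/n_\phi$ and vanishes whenever $M=\bar{\Phi}+\tilde{W}$ is singular --- something that cannot be excluded a priori since $\tilde{W}$ is random. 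A term-by-term conversion after the Young split is even worse off, because the noise term enters with a negative sign: you would simultaneously need $\bar{\Phi}\bar{\Phi}^{\mathsf{H}}\succeq c\,\bar{\Phi}^{\mathsf{H}}\bar{\Phi}$ (a lower bound) and $\tilde{W}\tilde{W}^{\mathsf{H}}\preceq c'\,\tilde{W}^{\mathsf{H}}\tilde{W}$ (an upper bound), neither of which holds in the matrix order.

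The paper never confronts this bridge because it stays with scalar quadratic forms throughout: for a unit vector $z$ it multiplies $\frac{1}{T}\sum_{k=0}^{T-1}|\phi_k^\top z|^2$ by $1=\frac{1}{n_\phi}\sum_{l=1}^{n_\phi}|e^{-j2\pi\omega_l k}|^2$, swaps the sums, and applies Jensen's inequality separately at each of the $n_\phi$ selected frequencies, i.e.,
\begin{equation*}
\left|\frac{1}{T}\sum_{k=0}^{T-1}\phi_k^\top z\, e^{-j2\pi\omega_l k}\right|^2 \leq \frac{1}{T}\sum_{k=0}^{T-1}\left|\phi_k^\top z\right|^2 ,
\end{equation*}
averaged over $l=1,\dots,n_\phi$ (cf.~\eqref{eq:spectral_bound_1}--\eqref{eq:spectral_bound_2}). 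The $1/n_\phi$ thus arises from reusing the full time-domain energy once per excited frequency --- not from any eigenvalue or trace argument --- and the bound lands directly in the Gram form appearing in \eqref{eq:lem3}, after which the identical Young split \eqref{youngs_ineq} finishes the proof. To repair your argument, replace the Parseval-plus-conversion plan by this per-frequency Jensen bound; the remainder of your proposal then goes through essentially verbatim.
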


\begin{proof}
Note that for any unit vector $z \in \mathbb{C}^{n_\mathrm{\phi}}$, and any realization $\{ \phi_k \}_{k=0}^{T-1}$,
\begin{align}\label{eq:spectral_bound_1}
\nonumber
z^\mathsf{H} \left( \frac{1}{T}\sum_{k=0}^{T-1}\phi_k \phi_k^\top\right) z & = \frac{1}{T} \sum_{k=0}^{T-1} \left|\phi_k^\top z\right|^2 \\
\nonumber
& = \frac{1}{T} \sum_{k=0}^{T-1} \left| \phi^\top_k z \right|^2 \cdot \underbrace{\frac{1}{L} \sum_{l=1}^{L} \left|e^{-j2\pi \omega_l k} \right|^2}_{=1}\\
& \geq \frac{1}{L}\sum_{l=1}^{L} \left| \frac{1}{T}\sum_{k=0}^{T-1} \phi_k^\top z e^{-j2\pi \omega_l k} \right|^2,
\end{align}
by Jensen's inequality. This leads to
\begin{align}\label{eq:spectral_bound_2}
\nonumber
z^\mathsf{H} \left( \frac{1}{T}\sum_{k=0}^{T-1}\phi_k \phi_k^\top \right) z & \geq \frac{1}{L}\sum_{l=1}^{L} \left| \frac{1}{T}\sum_{k=0}^{T-1} \phi_k^\top z e^{-j2\pi \omega_l k} \right|^2 \\
& \overset{\eqref{eq:wtilde},\eqref{eq:clar_phibar}}{=} \frac{1}{L} \left( z^\mathsf{H} ( \bar{\Phi}+\tilde{W} )( \bar{\Phi}+\tilde{W} )^\mathsf{H} z \right).
\end{align}
By Young's inequality \cite{caverly2019lmi}, for any $\epsilon >0$, we have
\begin{equation}
\bar{\Phi} \tilde{W}^\mathsf{H} + \tilde{W} \bar{\Phi}^\mathsf{H} \succeq -\epsilon \bar{\Phi} \bar{\Phi}^\mathsf{H} - \frac{1}{\epsilon} \tilde{W} \tilde{W}^\mathsf{H}
\end{equation}
and hence
\begin{equation}\label{youngs_ineq}
(\bar{\Phi}+\tilde{W})(\bar{\Phi}+\tilde{W})^\mathsf{H} \succeq (1-\epsilon) \bar{\Phi} \bar{\Phi}^\mathsf{H} - \left(\frac{1-\epsilon}{\epsilon}\right) \tilde{W} \tilde{W}^\mathsf{H}.
\end{equation}
By inserting Inequality \eqref{youngs_ineq} in Inequality \eqref{eq:spectral_bound_2}, we get \eqref{eq:lem3}.

\end{proof}

From Lemma \ref{lemma:uncertainty bound spectral}, using \eqref{eq:inf_matrix} and \eqref{eq:lem3}, for any $\epsilon \in (0,1)$, $\phi_k$ satisfies

\begin{small}
\begin{align}\label{eq:lowerbound}
\underbrace{\frac{1}{\bar{c}}\sum_{k=0}^{T-1}\phi_k \phi_k^\top}_{D_T} \succeq \frac{T}{\bar{c}L}\Bigg((1-\epsilon)V_\mathrm{tr} U_\mathrm{e} U_\mathrm{e}^\top {V_\mathrm{tr}}^\mathsf{H} 
- \left(\frac{1-\epsilon}{\epsilon}\right) \tilde{W}\tilde{W}^\mathsf{H} \Bigg)
\end{align}
\end{small}
where $\bar{c}=\sigma_\mathrm{w}^2 c_\delta$.

Inequality \eqref{eq:lowerbound} allows us to determine a lower bound on the finite excitation $D_\mathrm{post}=D_0+D_T$, and thus upper-bound the uncertainty of the MAP estimate using Lemma \ref{lem1}. Note that this lower bound depends on the amplitudes of the harmonic signals $U_\mathrm{e}$ (cf. \eqref{eq:info_matrix}, \eqref{eq:inf_matrix}), as well as on the size of the noise. However, determining a lower bound on $D_T$ based on Inequality \eqref{eq:lowerbound} results in non-convex constraints in the decision variable $U_\mathrm{e}$. We circumvent this issue by using a convex relaxation procedure.

\begin{table*}
\begin{small}
   \begin{align}\label{eq:explorationLMI}
S_{\textnormal{exploration}}(\epsilon,\tau,\tilde{U},U_\mathrm{e},\hat{V},l,\overline{D}_T,\Gamma_\mathrm{v}) = \begin{bmatrix}
        \left(1-\epsilon\right) (U_\mathrm{e}^\top \tilde{U} +  \tilde{U}^\top U_\mathrm{e} - \tilde{U}^\top \tilde{U}) & 0\\0 & -\left(\frac{1-\epsilon}{\epsilon}\right)l^2 I - \frac{\bar{c}\, L}{T} \overline{D}_T
    \end{bmatrix} -\tau \begin{bmatrix}
        -I & \hat{V}^\mathsf{H}\\ \hat{V} & \Gamma_\mathrm{v}-\hat{V}\hat{V}^\mathsf{H}
    \end{bmatrix}  \succeq 0    
\end{align} 
\end{small}
\end{table*}

\subsection{Convex relaxation}\label{subsec:convex_rel}
The following lemma provides a lower bound on $D_T$ which is linear in $U_\mathrm{e}$.
\begin{lemma}\label{lem:convexrel} For any matrices $\tilde{U} \in \mathbb{R}^{Ln_\mathrm{u} \times L}$ and $U_\mathrm{e} \in \mathbb{R}^{Ln_\mathrm{u} \times L}$, and any $\epsilon \in (0,1)$, we have:
\begin{align}\label{eq:convex_relaxation}
\nonumber
\frac{\bar{c}L}{T} D_T  \succeq & \left(1-\epsilon\right) \left( V_\mathrm{tr}\left( U_\mathrm{e} \tilde{U}^\top +  \tilde{U} U_\mathrm{e}^\top - \tilde{U} \tilde{U}^\top \right)V_\mathrm{tr}^\mathsf{H}\right)\\
& - \left(\frac{1-\epsilon}{\epsilon}\right)\tilde{W} \tilde{W}^\mathsf{H} .
\end{align} 
\end{lemma}
\begin{proof}
We have 
\begin{align}
\nonumber
&V_\mathrm{tr} U_\mathrm{e} U_\mathrm{e}^\top {V_\mathrm{tr}}^\mathsf{H} - V_\mathrm{tr} U_\mathrm{e} \tilde{U}^\top V_\mathrm{tr}^\mathsf{H} - V_\mathrm{tr} \tilde{U} U_\mathrm{e}^\top V_\mathrm{tr}^\mathsf{H} + V_\mathrm{tr} \tilde{U} \tilde{U}^\top V_\mathrm{tr}^\mathsf{H}\\
\nonumber
&=V_\mathrm{tr} (U_\mathrm{e}-\tilde{U})(U_\mathrm{e}-\tilde{U})^\top V_\mathrm{tr}^\mathsf{H}\\
\nonumber
& \succeq 0
\end{align}
and hence
\begin{align}\label{eq:conv_rel_1}
V_\mathrm{tr} U_\mathrm{e} U_\mathrm{e}^\top {V_\mathrm{tr}}^\mathsf{H}  \succeq V_\mathrm{tr}\left( U_\mathrm{e} \tilde{U}^\top +  \tilde{U} U_\mathrm{e}^\top - \tilde{U} \tilde{U}^\top \right)V_\mathrm{tr}^\mathsf{H}.
\end{align}
Inserting Inequality \eqref{eq:conv_rel_1} in Inequality \eqref{eq:lowerbound} leads to \eqref{eq:convex_relaxation}.\qedhere
\end{proof}

The bound derived in Lemma \ref{lem:convexrel} is tight in case $\tilde{U}=U_\mathrm{e}$. However, since $U_\mathrm{e}$ is unknown, we consider a candidate $\tilde{U}$ of linearly independent amplitudes corresponding to their respective spectral lines. We later embed this relaxation in an iterative process to reduce conservatism. Furthermore,  in \eqref{eq:convex_relaxation}, $V_\mathrm{tr}$ and $Y_\mathrm{tr}$ (in $\tilde{W}=Y_\mathrm{tr}W$) are unknown. Hence, in what follows, suitable bounds are derived.

\subsection{Bounds on transfer matrices}\label{subsec:hinf_bound}
Denote 
\begin{align}\label{eq:vtilde}
    \tilde{V}=V_\mathrm{tr}-\hat{V}
\end{align}
where the estimate
\begin{align}\label{eq:Vhat}
\hat{V}=[\hat{V}_1,\cdots,\hat{V}_{L}] \in \mathbb{C}^{n_\mathrm{\phi}\times Ln_\mathrm{u}}
\end{align} is computed using the prior $\hat{A}_0$ and $\hat{B}_0$ (cf. Assumption \ref{a1}). In Appendices \ref{appendix:gamma}, \ref{appendix:gamma_y} and \ref{appendix:sample}, we show how to compute a matrix $\Gamma_\mathrm{v} \succ 0$ and a constant $\gamma_\mathrm{y} >0$ such that
\begin{align}\label{eq:tf_prop}
\tilde{V}\tilde{V}^\mathsf{H} \preceq \Gamma_\mathrm{v},\;\|Y_\mathrm{tr}\| \leq \gamma_\mathrm{y},
\end{align}
assuming $\theta_\mathrm{tr} \in \mathbf{\Theta}_0$.

Utilizing \eqref{eq:tf_prop}, we can derive a bound on $\tilde{W}$ of the form $\|\tilde{W}\| \leq \|Y_\mathrm{tr}\| \|W\|$. A bound on $W$ can be determined since the each block on the diagonal of $W$ is Gaussian. In particular, we have
\begin{align}
\|W\| = \max_{i=1,\dots,L}\| \bar{w}(\omega_i) \|.
\end{align}
Since $\bar{w}(\omega_i) \sim \mathcal{N}(0,\sigma_{\bar{\mathrm{w}}}^2 I)$, we have that $\| \bar{w}(\omega_i) \|^2 \sim \sigma_{\bar{\mathrm{w}}}^2\chi_{n_\mathrm{x}}^2$, and hence
\begin{align}\label{eq:l1}
\mathbb{P}\left( \| \bar{w}(\omega_i)  \|^2 \leq l_1^2 \right) = 1-\delta,\, \forall i=1,...,L
\end{align}
with $l_1^2=\sigma_{\bar{\mathrm{w}}}^2\chi_{n_\mathrm{x}}^{2}(1-\delta)$. Therefore, we have
\begin{align}\label{eq:w_bound}
\mathbb{P}(\|W\| \leq l_1) = 1-\delta.
\end{align}
Using \eqref{eq:tf_prop} and \eqref{eq:w_bound}, we have
\begin{align}\label{eq:tildew_bound}
\tilde{W} \tilde{W}^\mathsf{H} \preceq l^2 I := (\gamma_\mathrm{y} l_1)^2 I.
\end{align}


The following lemma provides joint probabilistic bounds on $\tilde{V}$, $Y_\mathrm{tr}$ and $W$.

\begin{lemma}\label{lem:tfbounds}
Let Assumption \ref{a1} hold. Then
\begin{align}\label{eq:tfbound_prob}
\mathbb{P}( (\tilde{V}\tilde{V}^\mathsf{H} \preceq \Gamma_\mathrm{v})\cap(\|Y_\mathrm{tr}\|\leq \gamma_\mathrm{y})\cap(\|W\| \leq l_1)) \geq 1 -2\delta.
\end{align}
\end{lemma}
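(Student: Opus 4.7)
The plan is to reduce the joint probabilistic statement to the single deterministic event that the true parameters lie in the prior credibility region $\mathbf{\Theta}_0$, and then invoke the prior-based high-probability guarantee already established in \eqref{eq:prior_ub_prob}.

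First I would observe that the norm bounds $\gamma_\mathrm{v}$ and $\gamma_\mathrm{y}$ are constructed (via the $\ell_2$-gain arguments in Appendices \ref{appendix:gamma}--\ref{appendix:gamma_y}) as worst-case bounds over all $(A,B)$ consistent with $\mathbf{\Theta}_0$; that is, by design of $\gamma_\mathrm{v}, \gamma_\mathrm{y}$, the deterministic implication
\begin{align*}
\theta_\mathrm{tr}\in\mathbf{\Theta}_0 \;\Longrightarrow\; \|\tilde V\|\le \gamma_\mathrm{v} \text{ and } \|Y_\mathrm{tr}\|\le \gamma_\mathrm{y}
\end{align*}
holds, as already stated in \eqref{eq:tf_prop}. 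Consequently, the event on the left of \eqref{eq:tfbound_prob} contains the event $\{\theta_\mathrm{tr}\in\mathbf{\Theta}_0\}$, so the probability of the former is at least the probability of the latter.

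Next I would apply Lemma \ref{lem1} (equivalently, the Schur-complement restatement \eqref{eq:prior_ub_prob}) to the prior distribution itself, which corresponds to the $T=0$ case where $\Sigma_{\theta,\mathrm{post}}^{-1}=\Sigma_{\theta,\mathrm{prior}}^{-1}=\tilde D_0\otimes I_{n_\mathrm{x}}$ and hence $D_\mathrm{post}=D_0$. Under Assumption \ref{a1}, this gives $\mathbb{P}(\theta_\mathrm{tr}\in\mathbf{\Theta}_0)\ge 1-\delta$. Combining this with the deterministic inclusion of events above yields \eqref{eq:tfbound_prob}, completing the proof.

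The main subtlety, rather than a real obstacle, is conceptual: one must be careful that the \emph{joint} bound on $\tilde V$ and $Y_\mathrm{tr}$ does not require a union bound that would degrade the confidence level, because both quantities are deterministic functions of the same random parameter $\theta_\mathrm{tr}$ and are controlled simultaneously by the single event $\{\theta_\mathrm{tr}\in\mathbf{\Theta}_0\}$. Once that is recognised, the proof reduces to a two-line argument citing Lemma \ref{lem1} and the constructions in the appendices.
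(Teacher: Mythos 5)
Your proposal is correct and matches the paper's own proof: both reduce the joint event to the single inclusion $\{\theta_\mathrm{tr}\in\mathbf{\Theta}_0\}$, note that \eqref{eq:tf_prop} holds deterministically on that event by construction of $\gamma_\mathrm{v}$ and $\gamma_\mathrm{y}$, and then apply Lemma \ref{lem1} to conclude $\mathbb{P}(\theta_\mathrm{tr}\in\mathbf{\Theta}_0)\geq 1-\delta$. Your remark that no union bound is needed, since both norms are deterministic functions of the same random parameter, is exactly the point implicit in the paper's one-line argument.
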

\begin{proof} From Lemma \ref{lem1}, since $\mathbb{P}(\theta_\mathrm{tr} \in \mathbf{\Theta}_0) = 1-\delta$, the bounds in \eqref{eq:tf_prop} hold with probability $1-\delta$:
\begin{align*}
\mathbb{P}( (\tilde{V}\tilde{V}^\mathsf{H} \preceq \Gamma_\mathrm{v}) \cap  (\|Y_\mathrm{tr}\| \leq \gamma_\mathrm{y} ) )& \geq \mathbb{P}(\theta_\mathrm{tr} \in \mathbf{\Theta}_0)\\
& = 1-\delta.
\end{align*}
Hence,
\begin{align}
\nonumber
&\mathbb{P}((\tilde{V}\tilde{V}^\mathsf{H} \preceq \Gamma_\mathrm{v})\cap(\|Y_\mathrm{tr}\|\leq \gamma_\mathrm{y})\cap(\|W\| \leq l_1))\\
\nonumber
& \geq \mathbb{P}((\theta_\mathrm{tr}\in \mathbf{\Theta}_0)\cap(\|W\| \leq l_1))\\
\nonumber
& \geq 1-\mathbb{P}(\theta_\mathrm{tr}\notin \mathbf{\Theta}_0)-\mathbb{P}(\|W\| \nleq l_1)\\
\label{eq:all_bounds}
& \overset{\eqref{eq:w_bound}}{\geq} 1-2\delta
\end{align}
wherein the penultimate inequality follows from De Morgan's law. \qedhere
\end{proof}

\subsection{Final bound on the informativity of exploration}\label{subsec:exp_lowerbound}
In the following theorem, we compute a lower bound $\overline{D}_T$ on the informativity, i.e., the empirical covariance of the exploration data, before the process of exploration.

\begin{theorem}\label{thm1}
Let Assumption \ref{a1} hold. Suppose there exist matrices $U_\mathrm{e}$ and $\overline{D}_T$, and $\tau \geq 0$ such that $S_{\textnormal{exploration}}(\epsilon,\tau,\tilde{U},U_\mathrm{e},\hat{V},l,\overline{D}_T,\Gamma_\mathrm{v}) \succeq 0$ as in \eqref{eq:explorationLMI}.
Then, the application of the input \eqref{eq:exploration_controller} implies that with probability at least $1-2\delta$:
\begin{align}\label{eq:thm1}
D_T\succeq \overline{D}_T.
\end{align}
\end{theorem}
\begin{proof} By using \eqref{eq:vtilde} and \eqref{eq:tf_prop}, we get
\begin{align*}
    V_\mathrm{tr} V_\mathrm{tr}^\mathsf{H}-V_\mathrm{tr} \hat{V}^\mathsf{H} - \hat{V} V_\mathrm{tr}^\mathsf{H} + \hat{V} \hat{V}^\mathsf{H} \preceq \Gamma_\mathrm{v}
\end{align*} 
which can be written as
\begin{align}\label{eq:Vtr_ineq}
    \begin{bmatrix}
        V_\mathrm{tr}^\mathsf{H} \\ I   \end{bmatrix}^\mathsf{H}\begin{bmatrix}
        -I & \hat{V}^\mathsf{H}\\ \hat{V} & \Gamma_\mathrm{v} -\hat{V}\hat{V}^\mathsf{H}
    \end{bmatrix} \begin{bmatrix}
        V_\mathrm{tr}^\mathsf{H} \\ I
    \end{bmatrix} \succeq 0.
\end{align}
Additionally, we have $\tilde{W} \tilde{W}^\mathsf{H} \preceq l^2 I$ as in \eqref{eq:tildew_bound}.

From Lemma \ref{lem:convexrel}, and by using \eqref{eq:tildew_bound}, the following inequality implies that $D_T \succeq \overline{D}_T$:
\begin{align}\label{eq:lowerbound_pre}
\nonumber
     \left(1-\epsilon\right) \left( V_\mathrm{tr}\left( U_\mathrm{e} \tilde{U}^\top +  \tilde{U} U_\mathrm{e}^\top
    - \tilde{U} \tilde{U}^\top \right)V_\mathrm{tr}^\mathsf{H}\right)&\\
    - \left(\frac{1-\epsilon}{\epsilon}\right)l^2 I -\frac{\bar{c}L}{T}\overline{D}_T & \succeq 0.
\end{align}
Furthermore, 
\eqref{eq:lowerbound_pre} can be written as
\begin{footnotesize}
\begin{align}
\nonumber
\begin{bmatrix}
       * \\ *
    \end{bmatrix}^\mathsf{H}\begin{bmatrix}
        (1-\epsilon) (U_\mathrm{e} \tilde{U}^\top +  \tilde{U} U_\mathrm{e}^\top - \tilde{U} \tilde{U}^\top) & 0\\0 & 
            -\big(\frac{1-\epsilon}{\epsilon}\big)l^2 I
            - \frac{\bar{c}\, L}{T} \overline{D}_T
    \end{bmatrix}\begin{bmatrix}
        V_\mathrm{tr}^\mathsf{H} \\ I
    \end{bmatrix}\\
    \succeq \quad 0.
\end{align}    
\end{footnotesize}
From Lemma \ref{lem:tfbounds}, Inequalities \eqref{eq:Vtr_ineq} and \eqref{eq:tildew_bound} hold jointly with probability $1-2\delta$.
By using the matrix S-lemma \cite{boyd2004convex,vanwaarde2022noisy}, \eqref{eq:lowerbound_pre} holds for all $V_\mathrm{tr}$ satisfying \eqref{eq:Vtr_ineq} if \eqref{eq:explorationLMI} holds with $\tau \geq 0$.

Hence, if there exist $U_\mathrm{e}$ and $\overline{D}_T$ satisfying \eqref{eq:explorationLMI}, then \eqref{eq:thm1} holds with probability at least $1-2\delta$.
\end{proof}

As a result, we can pose the exploration problem of achieving a desired excitation $\overline{D}_T$ with minimal input energy using the following SDP: \begin{align}
\nonumber
\underset{U_\mathrm{e},\gamma_\mathrm{e},\tau}{\inf}  & \quad \gamma_\mathrm{e} \\
\label{eq:exp_problem}
\text{s.t. }& \quad S_{\textnormal{energy-bound}}(\gamma_\mathrm{e},U_\mathrm{e})\succeq 0\\
\nonumber
& \quad S_{\textnormal{exploration}}(\epsilon,\tau,\tilde{U},U_\mathrm{e},\hat{V},l,\overline{D}_T,\Gamma_\mathrm{v}) \succeq 0\\
\nonumber
& \quad \tau \geq 0.
\end{align}

Note that $\gamma_\mathrm{e}$, $U_\mathrm{e}$ and $\tau$ are the only decision variables; $\epsilon$, $\tilde{U}$ and $\overline{D}_T$ are user-defined, and $\hat{V}$, $l$ and $\Gamma_{\mathrm{v}}$ are given by \eqref{eq:Vhat}, \eqref{eq:tildew_bound} and \eqref{eq:tf_prop}, respectively. A solution of \eqref{eq:exp_problem} gives us the parameters required for the implementation of the exploration input, i.e., $U_\mathrm{e}=\mathrm{diag}(a_1,\dots,a_{L})$, which guarantees the desired excitation $\overline{D}_T$. The lower-bound $\overline{D}_T$ also implies a lower bound $D_\mathrm{post} \succeq \overline{D}_\mathrm{post}=D_0+\overline{D}_T$, and will be used to guarantee an uncertainty bound for dual control in Section \ref{sec:dual_control}. In order to reduce the suboptimality caused by the convex relaxation procedure, Problem \eqref{eq:exp_problem} can be iterated multiple times by re-computing $\tilde{U}$ for the next iteration as
\begin{align}\label{eq:L_choice}
\tilde{U}=U_\mathrm{e}^*
\end{align}
wherein $U_\mathrm{e}^*$ is the solution from the previous iteration. The suboptimality of convex relaxation can be reduced by iterating until $U_\mathrm{e}$ does not change. Furthermore, $\gamma_\mathrm{e}$ is guaranteed to be non-increasing with each iteration since the previous optimal solution $U_\mathrm{e}^*$ remains feasible.


\section{Robust Gain-scheduling Design}\label{sec:gain_scheduling_h2}
After exploration as described in Section \ref{sec:exploration}, the estimates $\hat{A}$ and $\hat{B}$ change due to the availability of new data. Since our goal is to design a stabilizing state-feedback controller that is influenced by the new estimates, the true system in \eqref{sys} is modeled as a linear parameter varying (LPV) system, where the estimates $\hat{A}_T$, $\hat{B}_T$ are measured online. The new estimates are utilized as a \textit{scheduling variable} to design a gain-scheduling controller that ensures that the closed-loop system is stable while also satisfying a $H_2$ performance bound.

To account for the change in the estimates of the parameters after exploration, the system in (\ref{sys}) is rewritten as:
\begin{align}\label{gx}
x_{k+1}=&A_\mathrm{tr} x_k+B_\mathrm{tr} u_k+w_k\\\nonumber
=&\hat{A}_0x_k+\hat{B}_0u_k+(\hat{A}_T-\hat{A}_0)x_k+(\hat{B}_T-\hat{B}_0)u_k\\\nonumber
& + (A_\mathrm{tr}-\hat{A}_T)x_k +(B_\mathrm{tr}-\hat{B}_T)u_k +w_k.
\end{align}

Designing a controller for (\ref{gx}) can be written in the form of a gain-scheduling problem with an uncertain signal $w^\mathrm{u}~=~\Delta_\mathrm{u} \phi$ and an online measurable signal $w^\mathrm{s}=\Delta_\mathrm{s} \phi$ with the corresponding scheduling and uncertainty blocks:
\begin{equation}\label{eq:uncertainty_blocks}
\begin{split}
\Delta_\mathrm{s} & = \begin{bmatrix}
\hat{A}_T-\hat{A}_0 & \hat{B}_T-\hat{B}_0
\end{bmatrix}, \\
\Delta_\mathrm{u} & = \begin{bmatrix}
A_\mathrm{tr}-\hat{A}_T & B_\mathrm{tr}-\hat{B}_T
\end{bmatrix}.
\end{split}
\end{equation}

The control input after exploration is defined as
\begin{equation}\label{eq:input}
u_k=K_\mathrm{x} x_k +K_\mathrm{s} w_k^\mathrm{s},
\end{equation}
where we use the fact that $w_k^\mathrm{s}$ is known/measurable after exploration. The controller parameters $K_\mathrm{x}$ and $K_\mathrm{s}$ are designed such that the closed-loop system is robustly stable and the specified performance criterion is met. Since the estimates $\hat{A}_T$ and $\hat{B}_T$ affect both $\Delta_\mathrm{s}$ and $\Delta_\mathrm{u}$, these blocks can be viewed as uncertain parameters, and the closed-loop uncertain system combining (\ref{gz}) and (\ref{gx}) can be written in the generalized plant form:
\begin{align}\nonumber
&\begin{bmatrix}
x_{k+1}\\z_k^\mathrm{s} \\ z_k^\mathrm{u} \\z_k
\end{bmatrix}  = 
\begin{bmatrix}
\hat{A}_0 + \hat{B}_0 K_\mathrm{x} & I +\hat{B}_0 K_\mathrm{s} & I & I\\
\begin{bmatrix}
I\\K_\mathrm{x}
\end{bmatrix} & \begin{bmatrix}
0 \\ K_\mathrm{s}
\end{bmatrix} & 0 & 0 \\
\begin{bmatrix}
I\\K_\mathrm{x}
\end{bmatrix} & \begin{bmatrix}
0 \\ K_\mathrm{s}
\end{bmatrix} & 0 & 0 \\ 
C & 0 & 0 & 0
\end{bmatrix}
\begin{bmatrix}
x_k\\w_k^\mathrm{s}\\w_k^\mathrm{u}\\w_k
\end{bmatrix},\\\label{closedloop}
&\qquad\qquad\qquad\qquad\quad w_k^\mathrm{s} = \Delta_\mathrm{s} z_k^\mathrm{s},\\\nonumber
&\qquad\qquad\qquad\qquad\quad w_k^\mathrm{u} = \Delta_\mathrm{u} z_k^\mathrm{u},
\end{align}
where $w^\mathrm{s} \rightarrow z^\mathrm{s}$ is the scheduling channel and $w^\mathrm{u} \rightarrow z^\mathrm{u}$ is the uncertainty channel.

\begin{figure}[hbtp]
\begin{center}
\includegraphics[width=0.2\textwidth]{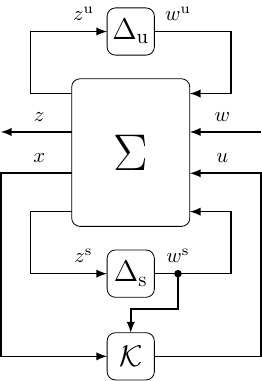}
\end{center}
\caption{Generalized plant view of the robust gain-scheduling problem \cite{venkatasubramanian2020robust}. }
\label{fig:GenPlant}
\end{figure}

As is apparent from Fig.~\ref{fig:GenPlant}, the closed-loop system has uncertainty and scheduling channels, affected by $\Delta_\mathrm{u}$ and $\Delta_\mathrm{s}$, respectively. The latter block $\Delta_\mathrm{s}$ is taken into account for the controller via $w_k^\mathrm{s}$ (cf.~\eqref{eq:input}), and hence plays the role of a scheduling variable.
This accounts for changes in the mean of the system parameters through data gathered in the exploration phase that is available after exploration at time $T$. 

Given this formulation and suitable bounds on the blocks $\Delta_\mathrm{s}$ and $\Delta_\mathrm{u}$, the following lemma provides a matrix inequality to design a robust gain scheduling controller satisfying the performance specification defined in \eqref{eq:h2perf}.

\begin{table*}[t]
\begin{subequations}\label{eq:LMIs_GS}
\begin{align}\label{eq:LMI_gs1}
S_{\textnormal{gain-scheduling-1}}(K_\mathrm{s},M,N,\lambda_\mathrm{s},\lambda_\mathrm{u},R_\mathrm{s}^{-1},R_{\mathrm{u}}^{-1})=\begin{pmatrix}
\begin{array}{c|c}
\begin{bmatrix}
-N & 0 & 0 & 0\\
0 & -\lambda_\mathrm{s} I & 0 & 0 \\
0 & 0 & -\lambda_\mathrm{u} I & 0 \\
0 & 0 & 0 & -\gamma_\mathrm{p}I
\end{bmatrix}&\star\\\hline
\begin{bmatrix}
\hat{A}_0N+\hat{B}_0 M& I+ \hat{B}_0 K_\mathrm{s}&I&I \\
\begin{bmatrix} N\\M \end{bmatrix} & \begin{bmatrix} 0\\K_\mathrm{s} \end{bmatrix} & 0 & 0 \\
 \begin{bmatrix} N\\M \end{bmatrix} & \begin{bmatrix} 0\\K_\mathrm{s} \end{bmatrix} & 0 & 0\\
0 & 0 & 0 & 0\\
\end{bmatrix}
&\begin{bmatrix}
-N & 0 & 0 & 0\\
0 & -\frac{1}{\lambda_\mathrm{s}}R_\mathrm{s}^{-1} & 0 & 0\\
0 & 0 & -\frac{1}{\lambda_\mathrm{u}}R_\mathrm{u}^{-1} & 0\\
0 & 0 & 0 & -I
\end{bmatrix}
\end{array}
\end{pmatrix}
\prec 0
\end{align}
\begin{align}\label{eq:LMI_gs2}
S_{\textnormal{gain-scheduling-2}}(M,N,Z)=\begin{bmatrix}
N & NC^\top \\ CN & Z
\end{bmatrix} \succ 0
\end{align}
\begin{align}\label{eq:LMI_gs3}
    S_{\textnormal{gain-scheduling-3}}(Z)=\text{trace}(Z)\leq \gamma_\mathrm{p}
\end{align}
\end{subequations}
\end{table*}

\begin{lemma}\label{lemma:robust}
Suppose $\Delta_\mathrm{s}\in\mathbf{\Delta}_\mathrm{s}:=\{\Delta: R_\mathrm{s}-\Delta^\top  \Delta \succ 0\}$, $\Delta_\mathrm{u}\in\mathbf{\Delta}_\mathrm{u}:=\{\Delta: R_\mathrm{u}-\Delta^\top \Delta  \succ 0\},$  with $R_\mathrm{u},R_\mathrm{s}\succ 0$.
If there exist matrices $K_\mathrm{s},M,N,Z$ and scalars $\lambda_\mathrm{s},\lambda_\mathrm{u}>0$ satisfying  the matrix inequalities \eqref{eq:LMIs_GS}, then the closed loop \eqref{closedloop} satisfies the $H_2$ performance bound $\gamma_\mathrm{p}$ \eqref{eq:h2perf} with $K_\mathrm{x}=MN^{-1}$, i.e., $u_k=MN^{-1}x_k+K_\mathrm{s} w_k^\mathrm{s}$. 
\end{lemma}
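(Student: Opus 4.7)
The plan is to apply a dissipativity argument combined with the full-block S-procedure for both the uncertainty and the scheduling channels, and then linearize the resulting matrix inequality via a standard change of variables so as to recover \eqref{eq:LMI_gain_schedule}.

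First I would introduce a quadratic storage function $V(x_k)=x_k^\top P x_k$ with $P\succ 0$, and note that robust quadratic performance as in Definition~\ref{def:quadperf} is implied (summing over $k$ and telescoping) by the dissipation inequality
\[
V(x_{k+1})-V(x_k)+\begin{pmatrix}w_k\\z_k\end{pmatrix}^\top P_\mathrm{p}\begin{pmatrix}w_k\\z_k\end{pmatrix}\le -\epsilon\,w_k^\top w_k
\]
holding along trajectories of \eqref{closedloop} for all admissible $\Delta_\mathrm{s}\in\mathbf{\Delta}_\mathrm{s}$ and $\Delta_\mathrm{u}\in\mathbf{\Delta}_\mathrm{u}$. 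Since the interconnections $w_k^\mathrm{s}=\Delta_\mathrm{s} z_k^\mathrm{s}$ and $w_k^\mathrm{u}=\Delta_\mathrm{u} z_k^\mathrm{u}$ are exactly the constraints that make $\Delta_\mathrm{s},\Delta_\mathrm{u}$ affect the state dynamics, I would enforce this dissipation inequality robustly by the full-block S-procedure: using the quadratic characterizations $\Delta_\mathrm{s}^\top Q_\mathrm{s}\Delta_\mathrm{s}+R_\mathrm{s}\succ 0$ and $\Delta_\mathrm{u}^\top Q_\mathrm{u}\Delta_\mathrm{u}+R_\mathrm{u}\succ 0$, one obtains, for any scalar multipliers $\lambda_\mathrm{s},\lambda_\mathrm{u}>0$, the implication
\[
\begin{pmatrix}z^\mathrm{s}_k\\w^\mathrm{s}_k\end{pmatrix}^\top\!\!\begin{pmatrix}\lambda_\mathrm{s} R_\mathrm{s}&0\\0&-\lambda_\mathrm{s} Q_\mathrm{s}^{-1}\end{pmatrix}\!\!\begin{pmatrix}z^\mathrm{s}_k\\w^\mathrm{s}_k\end{pmatrix}\ge 0,
\]
and similarly for the uncertainty channel. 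A sufficient condition for the dissipation inequality is then that the weighted sum of all these quadratic forms (storage difference, performance, and the two S-procedure terms) be negative semidefinite in $(x_k,w_k^\mathrm{s},w_k^\mathrm{u},w_k)$ along the closed-loop map in \eqref{closedloop}.

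The next step is to write this sufficient condition as a matrix inequality in the unknowns $P,K_\mathrm{x},K_\mathrm{s},\lambda_\mathrm{s},\lambda_\mathrm{u}$. Substituting the plant matrices from \eqref{closedloop} and collecting terms yields a matrix that is bilinear in $P$ and $K_\mathrm{x}$ (because of the product $P(\hat A_0+\hat B_0 K_\mathrm{x})$). To linearize, I would perform a congruence transformation with $\operatorname{diag}(N,\ldots)$ where $N:=P^{-1}$, introduce the change of variables $M:=K_\mathrm{x} N$, and also absorb suitable factors into the multipliers so that they appear in the form $\lambda_\mathrm{s}^{-1},\lambda_\mathrm{u}^{-1}$ where convenient. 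The quadratic blocks $-\lambda_\mathrm{s} Q_\mathrm{s}^{-1}$ and $-\lambda_\mathrm{u} Q_\mathrm{u}^{-1}$ and the performance inverse $R_\mathrm{p}^{-1}$ are moved into separate diagonal blocks by successive Schur complements (using $R_\mathrm{p}\succ 0$ from Definition~\ref{def:quadperf} and $R_\mathrm{s},R_\mathrm{u}\succ 0$), producing an enlarged LMI whose rows/columns correspond to the state, scheduling, uncertainty, disturbance and performance outputs. Reading off the result and matching blocks yields \eqref{eq:LMI_gain_schedule}.

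To conclude, given any feasible $(N,M,K_\mathrm{s},\lambda_\mathrm{s},\lambda_\mathrm{u})$ of \eqref{eq:LMI_gain_schedule}, one recovers $P=N^{-1}\succ 0$ and $K_\mathrm{x}=MN^{-1}$, reverses the Schur complement and congruence steps to obtain the dissipation inequality uniformly in $\Delta_\mathrm{s}\in\mathbf{\Delta}_\mathrm{s}$ and $\Delta_\mathrm{u}\in\mathbf{\Delta}_\mathrm{u}$, and then summing over $k\ge 0$ with $x_0=0$ and using $V\ge 0$ gives \eqref{eq:quad_perf}. The main obstacle I anticipate is the bookkeeping in the congruence and Schur-complement steps: the scheduling channel and the uncertainty channel have identical input/output matrices in \eqref{closedloop} but are treated with independent multipliers $\lambda_\mathrm{s},\lambda_\mathrm{u}$, and one must be careful to preserve the convexity of the final inequality in $(M,N,K_\mathrm{s},\lambda_\mathrm{s},\lambda_\mathrm{u})$ while handling the dependence on the uncertain blocks only through the pre-specified quadratic sets $\mathbf{\Delta}_\mathrm{s},\mathbf{\Delta}_\mathrm{u}$.
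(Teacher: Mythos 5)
Your strategy coincides with the paper's proof: the paper establishes the lemma by exactly the reverse of the linearization you describe --- a Schur complement on \eqref{eq:LMI_gain_schedule} followed by congruence with $\mathrm{diag}(N^{-1},I,I,I)$ and the substitutions $X=N^{-1}$, $K_\mathrm{x}=MN^{-1}$, yielding the robust performance inequality \eqref{eq:biginequality} --- and then cites \cite[Thm.~2]{scherer2001lpv}, whose content is precisely the dissipativity-plus-S-procedure argument you unpack (including the observation, which you also make, that the set descriptions are invariant under scaling $(Q_\mathrm{s},R_\mathrm{s})$ and $(Q_\mathrm{u},R_\mathrm{u})$ by $\lambda_\mathrm{s},\lambda_\mathrm{u}>0$). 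So, modulo one concrete error, your sketch is the same proof with the cited theorem expanded.

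The error: your S-procedure certificate for the scheduling channel is wrong as written. For $w_k^\mathrm{s}=\Delta_\mathrm{s}z_k^\mathrm{s}$ with $\Delta_\mathrm{s}^\top Q_\mathrm{s}\Delta_\mathrm{s}+R_\mathrm{s}\succ 0$, the usable quadratic certificate is
\[
\begin{pmatrix}w_k^\mathrm{s}\\z_k^\mathrm{s}\end{pmatrix}^{\top}\begin{pmatrix}\lambda_\mathrm{s}Q_\mathrm{s}&0\\0&\lambda_\mathrm{s}R_\mathrm{s}\end{pmatrix}\begin{pmatrix}w_k^\mathrm{s}\\z_k^\mathrm{s}\end{pmatrix}=\lambda_\mathrm{s}\,z_k^{\mathrm{s}\top}\bigl(\Delta_\mathrm{s}^\top Q_\mathrm{s}\Delta_\mathrm{s}+R_\mathrm{s}\bigr)z_k^\mathrm{s}\;\ge\;0,
\]
i.e., $Q_\mathrm{s}$ itself weights $w^\mathrm{s}$; this is the block $\lambda_\mathrm{s}P_\mathrm{s}$ with $P_\mathrm{s}=\mathrm{diag}(Q_\mathrm{s},R_\mathrm{s})$ appearing in \eqref{eq:biginequality}, and it is why $\lambda_\mathrm{s}Q_\mathrm{s}$ shows up \emph{uninverted} in \eqref{eq:LMI_gain_schedule}, while only $R_\mathrm{s}$, $R_\mathrm{u}$, $R_\mathrm{p}$ get inverted, and only later, by the Schur complement that creates the $-\tfrac{1}{\lambda_\mathrm{s}}R_\mathrm{s}^{-1}$, $-\tfrac{1}{\lambda_\mathrm{u}}R_\mathrm{u}^{-1}$, $-R_\mathrm{p}^{-1}$ diagonal blocks. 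Your block $-\lambda_\mathrm{s}Q_\mathrm{s}^{-1}$ is positive definite in the relevant case $Q_\mathrm{s}\prec 0$ (the paper takes $Q_\mathrm{s}=Q_\mathrm{u}=-I$), so your displayed inequality holds for \emph{all} pairs $(z^\mathrm{s},w^\mathrm{s})$ and encodes nothing about $\Delta_\mathrm{s}$; worse, if carried into the ``weighted sum negative definite'' condition, the $(w^\mathrm{s},w^\mathrm{s})$ diagonal block becomes the sum of the positive definite term $-\lambda_\mathrm{s}Q_\mathrm{s}^{-1}$ and positive semidefinite contributions (from $X\succ0$, $R_\mathrm{s},R_\mathrm{u},R_\mathrm{p}\succ0$), so that matrix inequality is infeasible and cannot reproduce \eqref{eq:LMI_gain_schedule}. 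With the multiplier corrected (and likewise for the uncertainty channel), the remainder of your sketch --- storage $V(x)=x^\top N^{-1}x$, telescoping from $x_0=0$ with $V\ge 0$, congruence with $\mathrm{diag}(N,I,I,I)$ and $M=K_\mathrm{x}N$ --- goes through and matches the paper's argument.
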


The proof of Lemma \ref{lemma:robust} is provided in Appendix \ref{appendix:robust_lemma}. We later define $R_\mathrm{s}$, $R_\mathrm{u}$ in terms of uncertainty bounds that affect the uncertain parameters $\Delta_\mathrm{s}$, $\Delta_\mathrm{u}$. For constants $\lambda_\mathrm{s},\lambda_\mathrm{u}$, Inequalities \eqref{eq:LMIs_GS} are LMIs. Hence, the matrix inequalities \eqref{eq:LMIs_GS} can be efficiently solved using line-search like techniques for $(\lambda_\mathrm{s},\lambda_\mathrm{u})\in\mathbb{R}^2$. 

The solution of \eqref{eq:LMIs_GS} yields control parameters $M$, $N$ and $K_\mathrm{s}$ which guarantee robust $H_2$ performance of the closed loop (\ref{closedloop}) for all bounded uncertainties $\Delta_\mathrm{s} \in \mathbf{\Delta}_\mathrm{s}$, $\Delta_\mathrm{u} \in \mathbf{\Delta}_\mathrm{u}$. In the next section, we combine the exploration strategy from Section \ref{sec:exploration}, and the parametrized state-feedback controller based on gain-scheduling in the overall dual control strategy.

\section{Dual Control}\label{sec:dual_control}
In this section, we propose a dual control strategy by combining targeted exploration (Section \ref{sec:exploration}), and robust gain-scheduling control (Section \ref{sec:gain_scheduling_h2}).

\subsection{Relationship between uncertain parameters}
An important aspect of the proposed approach lies in establishing the relationship between the uncertainty bounds $D_0$, $D_T$, that influence the uncertain parameters $\Delta_\mathrm{u}$, $\Delta_\mathrm{s}$ \eqref{eq:uncertainty_blocks}. Recall that $\theta_\mathrm{tr} \in \mathbf{\Theta}_0$ with high probability. In order to derive a bound on $\Delta_\mathrm{s}$ that is less conservative than the bound in \cite[Prop. 1]{venkatasubramanian2020robust}, we carry out the following parameter projection.

\subsubsection*{Parameter projection}
The estimate $\hat{\theta}_T=\textnormal{vec}([\hat{A}_T,\hat{B}_T])$ is projected on $\mathbf{\Theta}_0$ (cf. \eqref{eq:Theta0}) as follows:
\begin{equation}\label{eq:projection}
\tilde{\theta}_T=\Pi_{\mathbf{\Theta}_0}(\hat{\theta}_T):=\argmin_{\theta \in \mathbf{\Theta}_0} || \theta - \hat{\theta}_T ||_{(\overline{D}_\mathrm{post}\otimes I_{n_\mathrm{x}})}^2
\end{equation}
where $\overline{D}_\mathrm{post}=D_0+\overline{D}_T$ and $\tilde{\theta}_T=\mathrm{vec}([\tilde{A}_T,\tilde{B}_T])$ is the projected estimate. We redefine $\Delta_\mathrm{s}$, $\Delta_\mathrm{u}$ in terms of $(\tilde{A}_T,\tilde{B}_T)$ as
\begin{equation}\label{eq:Deltas_new}
\begin{split}
\Delta_\mathrm{s} & = \begin{bmatrix}
\tilde{A}_T-\hat{A}_0 & \tilde{B}_T-\hat{B}_0
\end{bmatrix}, \\
\Delta_\mathrm{u} & = \begin{bmatrix}
A_\mathrm{tr}-\tilde{A}_T & B_\mathrm{tr}-\tilde{B}_T
\end{bmatrix}.
\end{split}
\end{equation}
From Lemma \ref{lem1}, we have $\mathbb{P}(\theta_\mathrm{tr} \in \mathbf{\Theta}_T) \geq 1-\delta$ where
\begin{equation}\label{eq:hat_thetau}
\mathbf{\Theta}_T:=\left\{\theta:(\theta-\hat{\theta}_T)^\top (D_\mathrm{post} \otimes I_{n_\mathrm{x}}) (\theta-\hat{\theta}_T) \leq 1 \right\},
\end{equation}
with $D_\mathrm{post}=D_0+D_T$, and $\hat{\theta}_T$ is the estimate obtained after exploration \eqref{eq:LMS}. The following Lemma encapsulates the result of parameter projection and the bounds over uncertain parameters.

\begin{lemma}\label{lem:projection-bounds} Let Assumption \ref{a1} hold. If there exist matrices $U_\mathrm{e}$ and $\overline{D}_T$ satisfying the matrix inequality \eqref{eq:explorationLMI}, then  the following inequalities hold together with probability at least $1-3\delta$:
\begin{align}
\label{eq:delta0_bound}
\Delta_0^\top \Delta_0 & \preceq D_0^{-1},\\
\label{eq:deltas_bound}
\Delta_\mathrm{s}^\top \Delta_\mathrm{s} &\preceq D_0^{-1},\\
\label{eq:deltau_bound}
\Delta_\mathrm{u}^\top \Delta_\mathrm{u}  & \preceq \overline{D}_\mathrm{post}^{-1}.
\end{align} 

\end{lemma}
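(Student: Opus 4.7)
The plan is to establish each of the three bounds separately and then combine them via a single union bound on the underlying probabilistic events. Bound \eqref{eq:delta0_bound} is essentially \eqref{eq:prior_ub_prob}: two applications of the Schur complement to \eqref{eq:Delta0} convert the ellipsoidal inclusion $[A_\mathrm{tr},B_\mathrm{tr}]\in\mathbf{\Delta}_0$ into $\Delta_0^\top \Delta_0 \preceq D_0^{-1}$, valid on the event $\theta_\mathrm{tr}\in\mathbf{\Theta}_0$.

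For bound \eqref{eq:deltas_bound}, the key observation is that $\tilde{\theta}_T = \Pi_{\mathbf{\Theta}_0}(\hat{\theta}_T) \in \mathbf{\Theta}_0$ by construction of \eqref{eq:projection}, so $(\tilde{\theta}_T - \hat{\theta}_\mathrm{prior})^\top (D_0 \otimes I_{n_\mathrm{x}})(\tilde{\theta}_T - \hat{\theta}_\mathrm{prior}) \leq 1$ holds \emph{deterministically}. The same Schur-complement manipulation used for \eqref{eq:prior_ub_prob} then produces $\Delta_\mathrm{s}^\top \Delta_\mathrm{s} \preceq D_0^{-1}$ at no probabilistic cost; indeed, replacing $\hat{\theta}_T$ with its projection $\tilde{\theta}_T$ is precisely the device by which one avoids paying extra probability for containment in $\mathbf{\Theta}_0$ for $\Delta_\mathrm{s}$.

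The substantive step is bound \eqref{eq:deltau_bound}, where I would exploit the nonexpansiveness of $\Pi_{\mathbf{\Theta}_0}$ in the norm $\|\cdot\|_{\overline{D}_\mathrm{post}\otimes I_{n_\mathrm{x}}}$ used to define it. From Lemma \ref{lem1}, $(\theta_\mathrm{tr} - \hat{\theta}_T)^\top (D_\mathrm{post} \otimes I_{n_\mathrm{x}})(\theta_\mathrm{tr} - \hat{\theta}_T) \leq 1$ with probability at least $1-\delta$. On the event underlying Theorem \ref{thm1} (probability $\geq 1-2\delta$, namely $\theta_\mathrm{tr}\in\mathbf{\Theta}_0$ together with $\|W\|\leq l_1$), we have $D_\mathrm{post} \succeq \overline{D}_\mathrm{post}$, which weakens the Lemma \ref{lem1} bound to $\|\hat{\theta}_T-\theta_\mathrm{tr}\|^2_{\overline{D}_\mathrm{post}\otimes I_{n_\mathrm{x}}} \leq 1$. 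Since this event also ensures $\theta_\mathrm{tr}\in\mathbf{\Theta}_0$, nonexpansiveness of projection onto the closed convex set $\mathbf{\Theta}_0$ in the chosen norm gives $\|\tilde{\theta}_T - \theta_\mathrm{tr}\|_{\overline{D}_\mathrm{post}\otimes I_{n_\mathrm{x}}} \leq \|\hat{\theta}_T - \theta_\mathrm{tr}\|_{\overline{D}_\mathrm{post}\otimes I_{n_\mathrm{x}}} \leq 1$. Two Schur complements then deliver $\Delta_\mathrm{u}^\top \Delta_\mathrm{u} \preceq \overline{D}_\mathrm{post}^{-1}$, with $\Delta_\mathrm{u}=[A_\mathrm{tr}-\tilde{A}_T,\,B_\mathrm{tr}-\tilde{B}_T]$ interpreted consistently with the redefined $\Delta_\mathrm{s}$ so that $A_\mathrm{tr}=\hat{A}_0+\Delta_\mathrm{s}+\Delta_\mathrm{u}$.

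For the probability count, I would emphasize that Theorem \ref{thm1}'s underlying event (probability $\geq 1-2\delta$) already supplies both $\theta_\mathrm{tr}\in\mathbf{\Theta}_0$ (needed for (1) and for the nonexpansiveness step of (3)) and $D_\mathrm{post}\succeq\overline{D}_\mathrm{post}$, while $\theta_\mathrm{tr}\in\mathbf{\Theta}_T$ (probability $\geq 1-\delta$) is the only additional requirement for (3). A single union bound then yields $1 - 2\delta - \delta = 1-3\delta$. The main obstacle I foresee is this probability accounting: a naive approach would subtract $\delta$ once more for $\theta_\mathrm{tr}\in\mathbf{\Theta}_0$ and arrive at $1-4\delta$, missing the claim. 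Recognizing that this event is already absorbed into Theorem \ref{thm1}'s $1-2\delta$ guarantee is what closes the gap.
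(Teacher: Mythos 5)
Your proof is correct and follows essentially the same route as the paper's: the deterministic containment $\tilde{\theta}_T\in\mathbf{\Theta}_0$ for \eqref{eq:deltas_bound}, nonexpansiveness of the projection in the $\|\cdot\|_{\overline{D}_\mathrm{post}\otimes I_{n_\mathrm{x}}}$ norm combined with $D_\mathrm{post}\succeq\overline{D}_\mathrm{post}$ for \eqref{eq:deltau_bound}, and a union bound over $\theta_\mathrm{tr}\in\mathbf{\Theta}_0$, $\theta_\mathrm{tr}\in\mathbf{\Theta}_T$, and $\|W\|\leq l_1$ giving $1-3\delta$. Your explicit observation that $\theta_\mathrm{tr}\in\mathbf{\Theta}_0$ is already contained in the $1-2\delta$ event of Theorem \ref{thm1} and must not be counted twice is exactly the accounting the paper performs in \eqref{eq:finalbound}.
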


\begin{proof}The proof is provided in four parts. In the first three parts of the proof, we prove \eqref{eq:delta0_bound}-\eqref{eq:deltau_bound} by assuming $\theta_\mathrm{tr}~\in~ (\mathbf{\Theta}_0~\cap~\mathbf{\Theta}_T)$, where $\mathbf{\Theta}_T$ is defined in \eqref{eq:hat_thetau}, and $D_T \succeq \overline{D}_T$. In the last part, we then show that \eqref{eq:delta0_bound}-\eqref{eq:deltau_bound} hold jointly with probability at least $1-3\delta$.

\textbf{Part I.} Inequality \eqref{eq:delta0_bound} directly follows from $\theta_\mathrm{tr} \in \mathbf{\Theta}_0$ (cf.~\eqref{eq:Theta0} - \eqref{eq:prior_ub_prob}).

\textbf{Part II.} By definition of $\tilde{\theta}_T$ in \eqref{eq:projection}, we have $\tilde{\theta}_T \in \mathbf{\Theta}_0$, which implies $[\tilde{A}_T,\tilde{B}_T] \in \mathbf{\Delta}_0$ (cf. \eqref{eq:Theta0}-\eqref{eq:Delta0}), from Lemma \ref{lem1}. Hence Inequality \eqref{eq:deltas_bound} holds with $\Delta_\mathrm{s}$ according to \eqref{eq:Deltas_new}.

\textbf{Part III.} Let
\begin{align}
\mathbf{\overline{\Theta}}_T:=\left\{\theta:(\theta-\hat{\theta}_T)^\top (\overline{D}_\mathrm{post} \otimes I_{n_\mathrm{x}}) (\theta-\hat{\theta}_T) \leq 1 \right\}.
\end{align}
Note that $D_T \succeq \overline{D}_T$ implies $D_\mathrm{post} \succeq \overline{D}_\mathrm{post}$. If $D_\mathrm{post}~\succeq~\overline{D}_\mathrm{post}$, then $\mathbf{\Theta}_T \subseteq \overline{\mathbf{\Theta}}_T$. Combining this with $\theta_\mathrm{tr} \in \mathbf{\Theta}_T$ yields $\theta_\mathrm{tr} \in \mathbf{\overline{\Theta}}_T$.
Denote 
\begin{equation}
\mathbf{\Theta}_\mathrm{u}:=\left\{\theta:(\theta-\tilde{\theta}_T)^\top (\overline{D}_\mathrm{post} \otimes I_{n_\mathrm{x}}) (\theta-\tilde{\theta}_T) \leq 1 \right\}.
\end{equation}

Since $\theta_\mathrm{tr} \in \mathbf{\Theta}_0$, then we have
\begin{align*}
\|\theta_\mathrm{tr}-\tilde{\theta}_T\|_{(\overline{D}_\mathrm{post}\otimes I_{n_\mathrm{x}} )}\leq \|\theta_\mathrm{tr}-\hat{\theta}_T\|_{(\overline{D}_\mathrm{post}\otimes I_{n_\mathrm{x}})}
\end{align*}
given that the projection \eqref{eq:projection} is non-expansive (cf. Fig. \ref{fig:Delta}). Hence, we have $\theta_\mathrm{tr} \in \mathbf{\Theta}_\mathrm{u}$, which implies Inequality \eqref{eq:deltau_bound}.

\textbf{Part IV.} We have
\begin{align}\label{eq:finalbound}
\nonumber
&\mathbb{P}((\theta_\mathrm{tr} \in (\mathbf{\Theta}_0 \cap \mathbf{\Theta}_T))\cap (D_T \succeq \overline{D}_T))\\
\nonumber
\overset{\eqref{eq:all_bounds}}{\geq} & \mathbb{P}(\theta_\mathrm{tr} \in (\mathbf{\Theta}_0 \cap \mathbf{\Theta}_T))\cap(\|W\|\leq l_1))\\
\nonumber
\geq & 1- \mathbb{P}(\theta_\mathrm{tr} \notin \mathbf{\Theta}_0) - \mathbb{P}(\theta_\mathrm{tr} \notin \mathbf{\Theta}_T)- \mathbb{P}(\|W\| \nleq l_1)\\
\overset{\eqref{eq:hat_thetau}}{\geq}&  1-3\delta,
\end{align}
wherein the penultimate inequality follows from De Morgan's law. In Parts I-III, we showed that Inequalities \eqref{eq:delta0_bound}-\eqref{eq:deltau_bound} hold assuming $\theta_\mathrm{tr} \in (\mathbf{\Theta}_0 \cap \mathbf{\Theta}_T)$ and $D_T \succeq \overline{D}_T$. In \eqref{eq:finalbound}, we showed that $\theta_\mathrm{tr} \in (\mathbf{\Theta}_0 \cap \mathbf{\Theta}_T)$ and $D_T \succeq \overline{D}_T$ hold jointly with probability at least $1-3\delta$, which implies that \eqref{eq:delta0_bound}-\eqref{eq:deltau_bound} hold with probability at least $1-3\delta$. \qedhere
\end{proof}

\begin{figure}[h!]
\begin{center}
\includegraphics[width=0.35\textwidth]{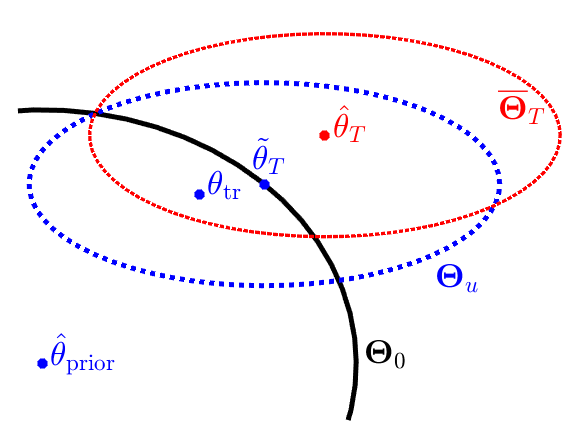}
\end{center}
\caption{Illustration of the sets from the proof of Lemma \ref{lem:projection-bounds}: $\mathbf{\Theta}_0=\mathbf{\Theta}_s,\text{ and }\mathbf{\Theta}_u$, the true parameters $\theta_\mathrm{tr}$, the initial parameter estimate $\hat{\theta}_\mathrm{prior}$, the estimate resulting from exploration $\hat{\theta}_T$, and the projected estimate $\tilde{\theta}_T$.}
\label{fig:Delta}
\end{figure}

The relationship between the different sets is illustrated in Figure \ref{fig:Delta}. From Lemma \ref{lem1}, it is known that the true system parameters $\theta_\mathrm{tr}$ are in some ellipse $\mathbf{\Theta}_0$ \eqref{eq:Theta0} around the initial parameter estimate $\hat{\theta}_\mathrm{prior}$. After exploration, $\hat{\theta}_T$ is obtained and projected to $\tilde{\theta}_T$ \eqref{eq:projection}. The projected $\tilde{\theta}_T$ is contained in the ellipse $\mathbf{\Theta}_0$. This ensures the following bound on $\Delta_\mathrm{s}$:~$\Delta_\mathrm{s}^\top \Delta_\mathrm{s} \preceq D_0^{-1}$ \eqref{eq:deltas_bound}. This bound on $\Delta_s$ is enabled by parameter projection and is less conservative than the bound proposed in \cite[Prop. 1]{venkatasubramanian2020robust}. The true system parameters $\theta_\mathrm{tr}$ are in some ellipse $\mathbf{\Theta}_\mathrm{u}$ around the projected estimate $\tilde{\theta}_T$ with some probability. Finally, from Lemma \ref{lem:projection-bounds}, $\theta_\mathrm{tr}$ lies in the intersection $\mathbf{\Theta}_0 \cap \mathbf{\Theta}_\mathrm{u}$ with some probability which ensures bounds on $\Delta_0$ and $\Delta_\mathrm{u}$ \eqref{eq:finalbound}. The satisfaction of these bounds enables the design of a robust gain-scheduled controller that guarantees $H_2$ performance \eqref{eq:LMIs_GS}, wherein we set $R_\mathrm{s}^{-1}=D_0$ and $R_\mathrm{u}^{-1}=\overline{D}_{\mathrm{post}}$.


\subsection{Proposed algorithm}
The objective of the proposed dual control approach is to ensure that the designed controller satisfies the $H_2$ performance bound \eqref{eq:h2perf} with high probability, by incorporating a suitable \textit{targeted} exploration strategy. 

The following SDP combines the robust gain-scheduling problem in \eqref{eq:LMIs_GS} and the exploration problem \eqref{eq:exp_problem}:

\begin{subequations}
\label{eq:opt_problem}
\begin{align}
&\underset{U_\mathrm{e},\overline{D}_T,K_\mathrm{s}, M, N, Z, \tau}{\inf}  \quad \gamma_\mathrm{e} \\\label{eq:opt_problem2}
\text{s.t. }& \quad S_{\textnormal{energy-bound}}(\gamma_\mathrm{e},U_\mathrm{e})\succeq 0\\
\label{eq:opt_problem_Ue}
& \quad S_{\textnormal{exploration}}(\epsilon,\tau,\tilde{U},U_\mathrm{e},\hat{V},l,\overline{D}_T,\Gamma_\mathrm{v}) \succeq 0\\
& \quad \tau \geq 0\\
\label{eq:opt_problem_gsc1}& \quad S_{\textnormal{gain-scheduling-1}}(K_\mathrm{s},M,N,\lambda_\mathrm{s},\lambda_\mathrm{u},R_\mathrm{s}^{-1},R_{\mathrm{u}}^{-1}) \prec 0\\
\label{eq:opt_problem_gsc2}& \quad S_{\textnormal{gain-scheduling-2}}(M,N,Z) \succ 0\\
\label{eq:opt_problem_gsc3}& \quad S_{\textnormal{gain-scheduling-3}}(Z) \leq \gamma_\mathrm{p}
\end{align}
\end{subequations}
Note that the SDP depends on the prior information $\hat{A}_0$, $\hat{B}_0$, and $D_0$ (cf. Assumption \ref{a1}), the desired $H_2$ performance specification $\gamma_\mathrm{p}$, the exploration time $T$, candidate $\tilde{U}$, and bounds $\Gamma_\mathrm{v}$, $\gamma_\mathrm{y}$, $l_1$. For fixed $\epsilon$, $\lambda_\mathrm{s}$, $\lambda_\mathrm{u}>0$, Problem \eqref{eq:opt_problem} is an SDP comparable to \cite[Eq. (28)]{venkatasubramanian2020robust}. Hence, the optimization problem can be solved efficiently using techniques like line-search or gridding for the variables $\epsilon$, $\lambda_\mathrm{s}$, $\lambda_\mathrm{u}>0$ with the SDP in an inner loop. Problem \eqref{eq:opt_problem} is feasible only if the desired performance specification $\gamma_\mathrm{p}$ is achievable by the gain-scheduling-based controller.
 
The solution of this optimization problem yields the exploration input $U_\mathrm{e}$, and parameters for the gain-scheduled controller $K_\mathrm{s}$, $M$ and $N$, which are subsequently used in the proposed sequential learning and control strategy. Recall that the exploration input is a linear combination of sinusoids, and can be implemented using the $L$ selected frequencies $\omega_i \in \Omega_T,\,i=1,...,L$, and their corresponding amplitudes $a_i=U_i$. The robust gain-scheduled controller can be implemented using $K_\mathrm{s}$ and $K_\mathrm{x}=MN^{-1}$. Furthermore, the robust control LMIs (cf.  Lemma~\ref{lemma:robust}) are contained in~\eqref{eq:opt_problem_gsc1}, \eqref{eq:opt_problem_gsc2} and \eqref{eq:opt_problem_gsc3}. These LMIs return a common Lyapunov function $N\succ0$ as well as controller parameters $M$, $K_\mathrm{s}$ which guarantee robust performance of the closed loop~\eqref{closedloop} for all uncertainties $\Delta_\mathrm{u},\Delta_\mathrm{s}$ satisfying $\Delta_\mathrm{u}^\top\Delta_\mathrm{u}\prec \overline{D}_\mathrm{post}^{-1}$ and $\Delta_\mathrm{s}^\top\Delta_\mathrm{s}\prec D_0^{-1}$. Within this framework, the data obtained during the exploration phase is lower-bounded by $\overline{D}_T+D_0$. This suggests that the uncertainty characterized by $\overline{D}_T$ for the robust controller design depends on the exploration phase through~\eqref{eq:opt_problem_Ue}. This couples the exploration phase and robust control, consequently resulting in a \emph{dual effect} of the proposed controller.

Upon application of the exploration input determined by solving \eqref{eq:opt_problem}, new data is obtained. Combining the new data with the prior information, the MAP estimate~\eqref{eq:LMS} is used to obtain obtain improved/updated estimates $\hat{A}_T$, $\hat{B}_T$ and a new bound $D_T^{-1}$ on the uncertainty. Parameters $\hat{\theta}_T$ are projected on $\mathbf{\Theta}_0$, resulting in $\tilde{\theta}_T$ and corresponding to matrices $\tilde{A}_T$, $\tilde{B}_T$. This ensures that the uncertainty over the parameters \eqref{eq:projection} does not increase. Subsequently, after $T$ time steps, the designed gain-scheduled controller with the new scheduling variable $\Delta_s=\begin{bmatrix}\tilde{A}_T-\hat{A}_0&\tilde{B}_T-\hat{B}_0\end{bmatrix}$ can be applied. Using \eqref{closedloop}, this controller can be explicitly written as a state feedback control law $K$:\nolinebreak
\begin{align}
\label{eq:K_new}
u_k&=K_\mathrm{x} x_k+K_\mathrm{s} w_k^\mathrm{s}\\
&=K_\mathrm{x} x_k+K_\mathrm{s}\left(\left(\tilde{A}_T-\hat{A}_0\right)x_k+\left(\tilde{B}_T-\hat{B}_0\right)u_k\right)\nonumber\\
&=\left(I_{n_\mathrm{u}}-K_\mathrm{s}\left(\tilde{B}_T-\hat{B}_0\right)\right)^{-1}\left(K_\mathrm{x}+K_\mathrm{s}\left(\tilde{A}_T-\hat{A}_0\right)\right)x_k\nonumber\\\nonumber
&=:K x_k.
\end{align}\nolinebreak
Note that $(I-K_\mathrm{s}(\tilde{B}_T-\hat{B}_0))$ is non-singular due to the equivalence in \cite[Thm.~2]{scherer2001lpv}\footnote{The feasibility of the LMIs \eqref{eq:LMIs_GS} implies the well-posedness of the closed loop \eqref{closedloop}, which in turn results in the non-singularity of $(I-K_\mathrm{s}(\tilde{B}_T-\hat{B}_0))$ for all $\Delta_\mathrm{s} \in \bm{\Delta}=\{\Delta: \Delta^\top \Delta \preceq D_0^{-1}\}$ (cf. Lemma \ref{lem:projection-bounds}, Part II).}. The overall procedure is summarized in Algorithm~\ref{alg:main}. 

\begin{algorithm}[H]
\caption{Dual control using gain-scheduling}
\label{alg:main}
\begin{algorithmic}[1]
\State Use a Gaussian prior (cf. Assumption \ref{a1}) to determine initial estimates $\hat{A}_0,\,\hat{B}_0$ and uncertainty bound $D_0^{-1}$.
\State Specify confidence level $\delta\in(0,1)$, $H_2$ performance index $\gamma_\mathrm{p}$~\eqref{eq:h2perf}, exploration length $T$ and $L$ frequencies.
\State Compute scalar $l_1$ \eqref{eq:l1}. Compute bounds $\Gamma_\mathrm{v}$ and $\gamma_\mathrm{y}$ in \eqref{eq:tf_prop}, and  via methods described in Appendix \ref{appendix:gamma} and \ref{appendix:gamma_y}.
\Statex \textit{Alternative:} Set $\beta \ll 1$, e.g., $\beta=10^{-10}$, and compute bound $\Gamma_\mathrm{v}$ and $\gamma_\mathrm{y}$ \eqref{eq:tf_prop} via the scenario approach as described in Appendix \ref{appendix:sample}.
\State Select initial candidate $\tilde{U}$ \eqref{eq:L_choice}.
\State Solve the optimization problem~\eqref{eq:opt_problem} for different values $\epsilon,\,\lambda_\mathrm{s},\,\lambda_\mathrm{u}>0$ (e.g., via line-search in an outer loop).
\Statex $\Rightarrow$ Sinusoidal exploration sequence amplitudes $a_i$ corresponding to their frequencies $\omega_i$ (cf. \eqref{eq:exploration_controller}).
\Statex $\Rightarrow$ Gain-scheduled controller parameters $K_\mathrm{s}$, $K_\mathrm{x}=MN^{-1}$ (cf. \eqref{eq:K_new}).
\State Apply the exploration input as in \eqref{eq:exploration_controller} for $k=0,\dots,T-1$.
\State Update estimates $\hat{A}_T,\,\hat{B}_T$ using new data (c.f. MAP estimation \eqref{eq:LMS}) and obtain projected parameters $\tilde{A}_T,\,\tilde{B}_T$ \eqref{eq:projection}.
\State Compute the state-feedback $K$ dependent on $\tilde{A}_T,\,\tilde{B}_T$ \eqref{eq:K_new}.
\State Apply the feedback $u_k=K x_k$, $k \geq T$.
\end{algorithmic}
\end{algorithm}

\subsection{Theoretical analysis}
\label{sec:dual_2}
The following result proves that Algorithm~\ref{alg:main} leads to a controller with closed-loop guarantees.
\begin{theorem}
\label{thm:main}
 Let Assumption \ref{a1} hold.  Suppose Problem \eqref{eq:opt_problem} is feasible, and Algorithm \ref{alg:main} is applied. Then, with probability\footnote{In case the constants $\Gamma_\mathrm{v}$ and $\gamma_\mathrm{y}$ are both computed using the scenario approach (cf. Appendix \ref{appendix:sample}), then each constant is only valid with confidence of $1-\beta$ and correspondingly the robust performance bound only holds with probability $1-3\delta-2\beta$.} $1-3\delta$, the closed-loop system \eqref{closedloop} with $u_k=Kx_k$ \eqref{eq:K_new} satisfies the $H_2$ performance bound $\gamma_\mathrm{p}$ \eqref{eq:h2perf}.
\end{theorem}
\begin{proof}
Firstly, we recall that Lemma \ref{lemma:robust} guarantees the performance bound \eqref{eq:h2perf}, assuming suitable bounds on $\Delta_\mathrm{s}$ and $\Delta_\mathrm{u}$. Then, we show that the exploration inequalities in combination with parameter projection in \eqref{eq:projection} ensure the bounds on  $\Delta_\mathrm{s}$ and $\Delta_\mathrm{u}$.

\textbf{Part I.} Recall that $R_\mathrm{s}^{-1}=D_0$ and $R_\mathrm{u}^{-1}=\overline{D}_{\mathrm{post}}$. According to Lemma \ref{lemma:robust}, the satisfaction of the robust control LMIs in  \eqref{eq:LMIs_GS} guarantees that the robust gain-scheduled controller $u_k~=~MN^{-1}x_k + K_\mathrm{s}w_k^\mathrm{s}$ ensures the $H_2$ performance bound \eqref{eq:h2perf}, if $\Delta_\mathrm{s}^\top \Delta_\mathrm{s} \preceq D_0^{-1}$, $\Delta_\mathrm{u}^\top \Delta_\mathrm{u} \preceq \overline{D}_\mathrm{post}^{-1}$. Thus, only the bounds $\Delta_\mathrm{s}^\top \Delta_\mathrm{s} \preceq D_0^{-1}$, $\Delta_\mathrm{u}^\top \Delta_\mathrm{u} \preceq \overline{D}_\mathrm{post}^{-1}$ remain to be shown.

\textbf{Part II.} Lemma \ref{lem:projection-bounds} shows that these bounds hold jointly with probability $1-3\delta$. \eqref{eq:finalbound}

Hence, the closed-loop system \eqref{closedloop} with $u_k=Kx_k$ \eqref{eq:K_new} satisfies the performance bound \eqref{eq:h2perf} with probability $1-3\delta$.\qedhere
\end{proof}

This result summarizes the theoretical properties of the proposed dual control approach, ensuring a desired performance specification \eqref{eq:h2perf} using a targeted exploration strategy.

\subsection{Discussion}

In what follows, we discuss the main features of the proposed approach as well as connections to existing works.

\textit{Summary - proposed approach:} The proposed approach, as outlined in Algorithm~\ref{alg:main}, yields a  sequential learning and control strategy jointly from SDP \eqref{eq:opt_problem}: $\mathrm{(i)}$ a harmonic exploration input to generate data for maximum a posteriori estimation; $\mathrm{(ii)}$ a linear state-feedback (exploitation) which is implemented after exploration. 
The resulting  state-feedback explicitly depends on the generated data in terms of the mean estimate and (with high probability) guarantees stability and performance (cf. Theorem~\ref{thm:main}). In general, direct state measurement, which is required for state-feedback, is a practical limitation. However, the assumption of i.i.d. process noise that is normally distributed with zero mean and known variance, and direct state measurement, allow the usage of standard MAP estimation which simplifies the exposition of the main results.
The lower-bound on excitation that is derived is comparable to the Fisher information matrix \cite{pronzato2008optimal}, with the key difference that in the design process that follows, uncertainty arising from prior information is robustly accounted for.
%
The proposed control strategy has a dual effect in the sense that the choice of the harmonic exploration input shapes the uncertainty bound of the MAP estimate after exploration (cf. Lemma~\ref{lem1} and Theorem~\ref{thm1}), which enters in the robust performance bound during exploitation (cf. Lemma~\ref{lemma:robust}). 
A main technical tool for establishing this link is the theory of spectral lines applied to the harmonic exploration (cf. Section~\ref{sec:exploration}).
As a result, the proposed exploration is \emph{targeted} in the sense that the amplitudes at different frequencies influence both the magnitude and the shape/direction of the uncertainty after exploration, and hence the uncertainty that is accounted for in the (robust) control design. 
%
%
Our dual control approach uses Problem~\eqref{eq:opt_problem}, which minimizes the energy of the exploration input such that a specified performance can be guaranteed for the closed loop after the exploration phase. With trivial modifications, it is also possible to solve the converse problem, i.e., optimizing the closed-loop performance after exploration subject to a constraint on the exploration energy.

\textit{Limitations:} Performance synthesis through robust gain-scheduling based on a common Lyapunov function provides a simple design, however, introduces conservatism. This conservatism could be reduced by using a parameter-dependent Lyapunov function~\cite{dinh2005parameter}.
The parametrized gain-scheduling feedback computer prior to exploration can also be improved by re-optimization using the updated estimates $\hat{A}_T, \hat{B}_T$ and bound $D_T^{-1}$. The presented theory ensures that (with high probability) this controller achieves at least the desired $H_2$ performance bound $\gamma_\mathrm{p}$.
\subsubsection*{Related approaches}
The proposed dual control algorithm is similar to, and inspired by, recent robust dual control approaches \cite{umenberger2019robust, ferizbegovic2019learning}, albeit with two crucial differences that are discussed as follows.

Firstly, in the exploration phase, we consider \emph{harmonic} inputs of the form $u_k=\sum_{i=1}^{L}a_i \cos(2 \pi \omega_i k)$, whereas \cite{umenberger2019robust, ferizbegovic2019learning, venkatasubramanian2020robust} suggest inputs $u_k=Kx_k+v_k$ with a matrix $K$ and a zero-mean Gaussian random variable $v_k$.
The frequencies $\omega_i$ in the harmonic input signal are specified a priori, and allow for an intuitive tuning based on possible prior knowledge about the system.
Subsequently, the proposed dual control approach determines the corresponding amplitudes $a_i$, weighting the frequency components to achieve a targeted uncertainty reduction. 
Most importantly, the influence of the exploration input on the uncertainty bound after exploration can be quantified \emph{a priori}, in a rigorous manner as derived in Section \ref{sec:prelim_spectral}, using the theory of spectral lines \cite{sarker2020parameter}.
On the contrary, the bounds obtained by~\cite{umenberger2019robust, ferizbegovic2019learning, venkatasubramanian2020robust} do \textit{not} yield any guarantees for the actual exploration since the empirical covariance is approximated in a heuristic way via the worst-case covariance.

Secondly, in the exploitation phase, we consider a \emph{gain-scheduling controller}, where the difference of the parameter estimates before and after exploration plays the role of the scheduling parameter. This enables us to provide a priori performance and stability guarantees. On the contrary, \cite{umenberger2019robust} and \cite{ferizbegovic2019learning} consider the simplification that the estimate after exploration is equal to initial estimate.  Given that the initial estimate is uncertain, \cite{umenberger2019robust} and \cite{ferizbegovic2019learning} require a re-design step after exploration. Furthermore, such an additional online design step, which could be equivalently added in our proposed approach to further improve performance, implies that the exploration is not directly targeted to the final robust design problem, and no \emph{a priori} guarantees can be given. We note that a similar gain-scheduling approach to robust dual control was also suggested in our preliminary work \cite{venkatasubramanian2020robust}, however, employing a random exploration strategy without excitation guarantees and satisfying a quadratic performance bound. Furthermore, since \cite{venkatasubramanian2020robust} considers a random exploration input (as in \cite{umenberger2019robust} and \cite{ferizbegovic2019learning}), the resulting guarantees only hold under restrictive assumptions (cf.~\cite[Assumption~2]{venkatasubramanian2020robust}).

Finally, we note that the considered control problem could also be addressed using results from regret analysis and sample complexity bounds of LQR design based on system identification. For example, in~\cite{dean2017sample}, a random excitation is followed by the least-squares estimate and a robust LQR design. This also provides performance/regret bounds with respect to the optimal policy in dependence of the number of roll-outs, i.e., numerous trials with re-start. In contrast to the proposed approach, the exploration is not targeted, and in general, the guarantees are qualitative in nature, i.e., in the limit of multiple roll-outs, the performance approaches the optimum, but, sharp performance specifications for the first roll-outs cannot be given.

\section{Numerical Example}
In this section, we demonstrate the practical applicability of the proposed sequential robust dual control approach through a numerical example, which is `hard to learn'. Numerical simulations were performed on MATLAB using CVX \cite{cvx} in conjunction with the default SDP solver SDPT3. First, we illustrate the benefits of the proposed targeted exploration strategy \eqref{eq:exp_problem} compared to a common random exploration strategy (Section \ref{sec:exploration}). Subsequently, we implement the overall dual control strategy (Algorithm \ref{alg:main}) and demonstrate the trade-off between exploration energy and performance achieved after exploration (Section \ref{sec:dual_control}).

\subsection{Problem setup}
We consider a linear system \eqref{sys} with $\sigma_\mathrm{w}^2=1$ and
\begin{align}\label{eq:exsys}
A_\mathrm{tr}=\begin{bmatrix}
0.49 & 0.49 & 0 & 0\\ 0 & 0.49 & 0.49 & 0\\ 0 & 0 & 0.49 & 0.49 \\ 0 & 0 & 0 & 0.49
\end{bmatrix},\;
B_\mathrm{tr}=\begin{bmatrix}
0 \\ 0 \\ 0\\  0.49
\end{bmatrix}
\end{align}
which belongs to a class of linear systems identified as `hard to learn' \cite{tsiamis2021linear}. The generalized error \eqref{gz} is characterized by the known matrix $C=I$. We generate one random noise sequence $w_k \sim \mathcal{N}(0,1)$ for the exploration time horizon $T=100$, which is used for all the following simulations. We consider a Gaussian prior (cf. Assumption \ref{a1}) with initial estimate $\hat{\theta}_\mathrm{prior}$ and an initial uncertainty bound $D_0^{-1}~=~5\cdot~10^{-3}I$.
The estimate $\hat{\theta}_\mathrm{prior}$ is sampled randomly such that $(\hat{\theta}_\mathrm{prior}~-~\theta_\mathrm{tr})^\top~(D_0~\otimes~I_{n_\mathrm{x}})~ (\hat{\theta}_\mathrm{prior}-\theta_\mathrm{tr}) \leq 1$. We set $\epsilon=0.5$ and select the probability of violation $\delta=0.01$. The constants $\Gamma_\mathrm{v}$, $l$ are computed using the scenario approach (cf. Appendix \ref{appendix:sample}) with confidence level $\beta=10^{-10}$. 

\subsection{Targeted exploration}
In this section, we demonstrate the targeted exploration strategy as described in Section \ref{sec:exploration}. In addition, we compare the targeted exploration strategy with a random exploration strategy that applies normally distributed inputs, i.e., the inputs $u_k \sim \mathcal{N}(0,1)$. For a fair comparison, the random exploration inputs are scaled such that both strategies apply inputs with the same energy for exploration. We then evaluate and compare the level of excitation achieved.

In the context of targeted exploration, the objective is to ensure a given lower bound on excitation $D_T \succeq \overline{D}_T$, and targeted exploration inputs are obtained by solving the exploration problem \eqref{eq:exp_problem}. In our simulations, we set $T~=~100$ and select $L=10$ frequencies from the set $\Omega_{100}$ to yield $\omega_i=\{0,0.1,0.2,0.3,0.4,0.5,0.6,0.7,0.8,0.9\}$, $i=1,...,L$. The solution of \eqref{eq:exp_problem} provides the resulting amplitudes $a_i,\,i=1,...,L$ corresponding to the sinusoids with frequencies $\omega_i,\,i=1,...,L$. Given $T$, $a_i$ and $\omega_i$, the input signal \eqref{eq:exploration_controller} and its energy $\sum_{k=1}^T u_k^2$ can be computed.

In our simulations, we implement the proposed targeted exploration and random exploration for different initial uncertainty bounds $D_0^{-1}$. Due to the chained structure of $A_\mathrm{tr}$ with weakly-coupled states, it is difficult to excite the first state. Hence, we compare the excitation achieved by targeted and random exploration corresponding to the first state, i.e., the $(1,1)$-element of $D_T$ denoted by $D_{T_{11}}$. We impose a constraint on $D_{T_{11}}$ by setting $\overline{D}_{T_{11}} \geq 10^6$. For a trial, we multiply the initial excitation bound $D_0$ by a factor $\alpha_i$ to achieve $D_{0_i}=\alpha_{i} D_0$. We randomly sample a prior estimate $\hat{\theta}_{\mathrm{prior},i}$ such that $(\hat{\theta}_{\mathrm{prior},i}-\theta_\mathrm{tr})^\top (D_{0_i} \otimes I_{n_\mathrm{x}}) (\hat{\theta}_{\mathrm{prior},i}-\theta_\mathrm{tr}) \leq 1$, and compute the corresponding constants $\Gamma_{\mathrm{v},i}$ and $l_i$ via the scenario approach (cf. Appendix \ref{appendix:sample}). We run 10 trials for each of the following values $\alpha_i \in \{10^0,10^1,10^2,10^3,10^4\}$ wherein each trial uses a different randomly sampled initial estimate $\hat{\theta}_\mathrm{prior}$ and a different sequence of random inputs for random exploration. Each trial comprises: $\mathrm{(i)}$ determining the initial excitation bound $D_{0_i}$, prior estimate $\hat{\theta}_{\mathrm{prior},i}$, and the corresponding constants $\Gamma_{\mathrm{v},i}$ and $l_i$, $\mathrm{(ii)}$ solving the exploration problem \eqref{eq:exp_problem} to obtain targeted exploration inputs, and $\mathrm{(iii)}$ generating random exploration inputs with the same energy as the targeted inputs. We iterate Problem \eqref{eq:exp_problem} over $\tilde{U}$ \eqref{eq:L_choice}, as described in Section \ref{subsec:exp_lowerbound} to reduce the suboptimality of convex relaxation. Given the targeted and random exploration inputs, we get observed data sets from 10 trials corresponding to each $\alpha_i$. From the datasets, we compute the mean value of finite excitations achieved by targeted exploration $D_{T_{T,i,11}}$ and random exploration $D_{T_{R,i,11}}$. From Fig. \ref{fig:goaldt}, it can be observed that the targeted exploration inputs achieve the desired excitation for all tested initial uncertainty levels, i.e., $D_{T_{T,i,11}} \geq 10^6$ for all trials. The achieved excitation is higher than the desired excitation due to inherent conservatism of the targeted exploration strategy, which utilizes worst-case bounds $\Gamma_{\mathrm{v}}$ and $l$. The strategy is more conservative for large initial uncertainty (small $\alpha$). On the other hand, random exploration achieves significantly smaller excitation with the same energy by a factor of approximately $10$. Furthermore, random exploration has a large variance in the achieved exploration due to the stochasticity introduced by the random inputs. Hence, targeting the exploration with respect to the desired excitation can have enormous advantages over random exploration.
\begin{figure}[h!]
\begin{center}
\includegraphics[trim={0.2cm 0.1cm 1cm 0.5cm}, clip, width=0.48\textwidth]{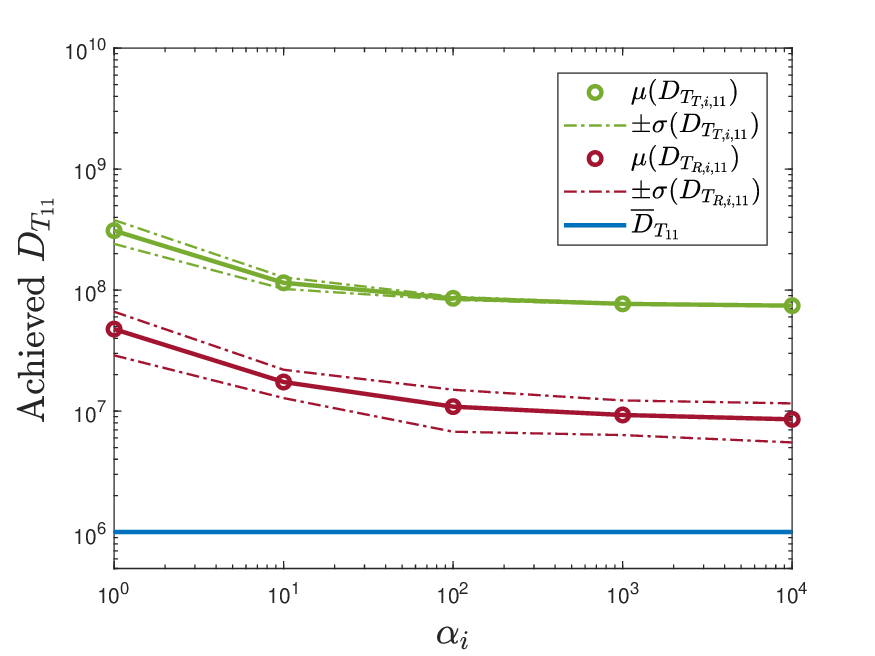}
\end{center}
\caption{Illustration of $\mathrm{(i)}$ the mean and standard deviation of the excitation achieved by targeted exploration $D_{T_{T,11}}$, and by random exploration $D_{T_{R,11}}$, for different values of $\alpha_i$ that scales the initial uncertainty bound $D_0$, and $\mathrm{(ii)}$ the desired excitation $\overline{D}_{T_{11}}=10^6$.}
\label{fig:goaldt} 
\end{figure}

\subsection{Dual control with robust gain-scheduling}
In this section, we use the proposed dual control strategy to study the trade-off between the desired $H_2$ performance and the resulting exploration energy $\gamma_\mathrm{e}$ \eqref{eq:min_energy_cost}. We consider again an initial uncertainty bound, i.e., $D_0^{-1} = 5\cdot 10^{-3}I$ and a fixed initial estimate $\hat{\theta}_{\mathrm{prior}}$ satisfying $(\hat{\theta}_\mathrm{prior}~-~\theta_\mathrm{tr})^\top (D_0~\otimes~I_{n_\mathrm{x}}) (\hat{\theta}_\mathrm{prior}-\theta_\mathrm{tr})\leq 1$. For $H_2$ performance, we consider a $\gamma_p>0$ (Def. \ref{def:h2perf}). In order to determine the efficacy of the proposed dual controller, we additionally determine the performance of a nominal $H_2$ controller with exact model knowledge, which is denoted by $\underline{\gamma}_\mathrm{p}=2.65$, and the performance of a robust $H_2$ controller computed based on the initial uncertainty bound $D_0^{-1}$, which is denoted by $\overline{\gamma}_\mathrm{p}=3.05$. We solve Problem \eqref{eq:opt_problem} for different values of $\gamma_\mathrm{p}$ in the interval $[\underline{\gamma}_\mathrm{p},\overline{\gamma}_\mathrm{p}]$. From Fig. \ref{fig:gegp}, it can be observed as we want to achieve better performance (decrease $\gamma_\mathrm{p}$), the required exploration energy $\gamma_\mathrm{e}$ increases. There exists a large range of $\gamma_\mathrm{p}$ where we can incrementally achieve better performance by increasing exploration energy. However, there exists a minimal performance $\gamma_\mathrm{p} \approx 2.85$ which we can guarantee a priori with the proposed approach, even with arbitrarily large exploration energy. The fact that this is larger than the ideal performance $\underline{\gamma}_\mathrm{p}=2.65$ is partially due to the conservatism of the robust gain-scheduling synthesis based on a common Lyapunov function.

\begin{figure}[h!]
\begin{center}
\includegraphics[trim={0.1cm 0cm 1cm 0.5cm}, clip, width=0.48\textwidth]{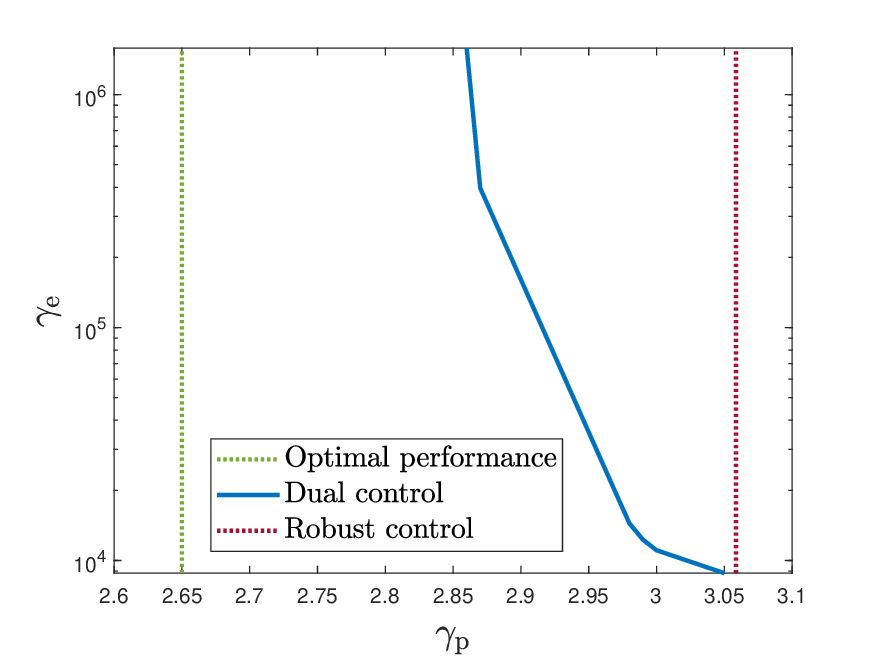}
\end{center}
\caption{Illustration of the exploration energy $\gamma_\mathrm{e}$ required to achieve a certain $H_2$ performance $\gamma_\mathrm{p}$. For comparison, the $H_2$ performance of a robust controller based on the initial uncertainty bound $D_0^{-1}$, and the optimal $H_2$ performance based on true system knowledge are provided.}
\label{fig:gegp} 
\end{figure}

For fixed values of $\epsilon$, $\lambda_\mathrm{s}$, and $\lambda_\mathrm{u}$, the average execution time of Problem \eqref{eq:opt_problem} over 10 trials, with and without iterations over $\tilde{U}$ is 2.74 seconds and 0.85 seconds, respectively. In the considered numerical example, satisfactory performance was observed for fixed $\epsilon=0.5$, and line-search over $\lambda_\mathrm{s}$ and $\lambda_\mathrm{u}$. In the numerical example, the line-search involved 10 iterations. The average execution time for solving Problem \eqref{eq:opt_problem} in a loop with line-search, over 10 trials, is approximately 28 seconds. In general, Problem \eqref{eq:opt_problem} is a standard SDP, and the computational effort scales polynomially with the dimension of the state $n_\mathrm{x}$ and the control input $n_\mathrm{u}$. The simulations were carried out on a system with an Intel(R) Core(TM) i7-9750H CPU @ 2.60GHz Processor and 16.0 GB RAM.

Overall, the simulation results corroborate the benefits of the proposed dual-control strategy. Given an initial prior and uncertainty bound, the targeted exploration strategy guarantees an \textit{a priori} lower bound on excitation, which in turn influences the uncertainty bound over model parameters after exploration. Furthermore, dual control with gain-scheduling demonstrates the \textit{targeted} nature of the exploration inputs, i.e., exploration is only carried out in order to improve the accuracy of system parameters and increase performance beyond the performance of a robust controller with knowledge of initial uncertainty.

\section{Conclusion}
In this article, we have presented a sequential dual control strategy that involves a targeted exploration phase with harmonic inputs that can guarantee \textit{a priori} excitation bounds, and a robust control phase based on gain scheduling that can guarantee $H_2$ performance after exploration. To simplify the exposition, we have considered Schur stable systems with process noise and directly measurable states. The main technical tool used to design the exploration phase is the theory of spectral lines. The amplitudes of the harmonic inputs are optimized so as to shape the uncertainty bound of the parameter estimates after exploration, which are accounted for by the controller based on gain scheduling, and thereby encapsulate the dual effect. We have demonstrated the applicability of the proposed dual control strategy, and clear benefits of targeted exploration, with numerical application to a system from a class that is `hard to learn'.

In summary, to the best knowledge of the authors, the presented approach provides the first solution of the \emph{robust dual control problem}, which is $\mathrm{(i)}$ computationally tractable, and $\mathrm{(ii)}$ links a targeted exploration with an exploitation phase yielding robust guarantees without resorting to approximations or heuristics to mimic a dual effect.

\appendices
\section{Bound on $\tilde{V}\tilde{V}^\mathsf{H}$}\label{appendix:gamma}
In order to determine a bound of the form $\tilde{V}\tilde{V}^\mathsf{H} \preceq \Gamma_\mathrm{v}$, we first determine a bound on $\|\tilde{V}\|$ based on the sub-matrices of $\tilde{V}$.  The matrix $\tilde{V}=V_\mathrm{tr}~-~\hat{V}~\in~ \mathbb{C}^{n_\mathrm{\phi} \times Ln_\mathrm{u}}$ can be denoted in terms of its sub-matrices as $\tilde{V}~=~[\tilde{V}_1,\cdots,\tilde{V}_{L}]~=~[V_1
~-~\hat{V}_1,\cdots,V_{L}~-~\hat{V}_{L}]$. 

\begin{proposition}\label{prop:sub_matrix_norm}
Suppose $\|\tilde{V}_i\| \leq \overline{\gamma}_\mathrm{v},\,\forall i=1,\dots,L$, then $\|\tilde{V}\|\leq \overline{\gamma}_\mathrm{v} \sqrt{L}$.
\end{proposition}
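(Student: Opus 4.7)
The plan is to invoke the standard fact that horizontally stacking $n_\phi$ matrices with uniformly bounded operator norms produces a matrix whose operator norm is bounded by $\sqrt{n_\phi}$ times the uniform bound. This follows from a direct estimate on $\|\tilde{V} z\|$ for an arbitrary unit vector $z$, combined with a Cauchy--Schwarz step to convert an $\ell_1$ sum into an $\ell_2$ one.

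Concretely, I would take any $z \in \mathbb{C}^{n_\phi}$ with $\|z\|=1$ and partition it conformally with the column-block structure of $\tilde{V}=[\tilde{V}_1,\dots,\tilde{V}_{n_\phi}]$ as $z=[z_1^\top,\dots,z_{n_\phi}^\top]^\top$. Then write
\begin{equation*}
\tilde{V} z \;=\; \sum_{i=1}^{n_\phi} \tilde{V}_i\, z_i,
\end{equation*}
apply the triangle inequality and the hypothesis $\|\tilde{V}_i\|\le \overline{\gamma}_\mathrm{v}$ to obtain
\begin{equation*}
\|\tilde{V} z\| \;\le\; \sum_{i=1}^{n_\phi} \|\tilde{V}_i\|\,\|z_i\| \;\le\; \overline{\gamma}_\mathrm{v} \sum_{i=1}^{n_\phi}\|z_i\|,
\end{equation*}
and finally use Cauchy--Schwarz in the form
\begin{equation*}
\sum_{i=1}^{n_\phi}\|z_i\| \;\le\; \sqrt{n_\phi}\,\Bigl(\sum_{i=1}^{n_\phi}\|z_i\|^2\Bigr)^{1/2} \;=\; \sqrt{n_\phi}\,\|z\| \;=\; \sqrt{n_\phi}.
\end{equation*}
Taking the supremum over unit $z$ yields $\|\tilde{V}\|\le \overline{\gamma}_\mathrm{v}\sqrt{n_\phi}$.

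There is no substantive obstacle here: the statement is a routine consequence of submultiplicativity and Cauchy--Schwarz applied to the column-block partition. An alternative would be to pass through the Frobenius norm via $\|\tilde{V}\|^2\le \|\tilde{V}\|_F^2 = \sum_i \|\tilde{V}_i\|_F^2$, but since each $\tilde{V}_i$ is a single column (given $n_\mathrm{u}=1$), this reduces to the same bound and I would prefer the direct operator-norm argument for clarity.
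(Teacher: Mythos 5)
Your proposal is correct and follows essentially the same route as the paper: a conformal column-block partition of a unit vector, the triangle inequality together with $\|\tilde{V}_i z_i\|\leq \|\tilde{V}_i\|\,\|z_i\|$, and the bound $\sum_{i=1}^{n_\phi}\|z_i\|\leq \sqrt{n_\phi}$. In fact you make explicit the Cauchy--Schwarz step that the paper states without justification in its final equality $\sup_{\|x\|\leq 1}\sum_i \|x_i\| = \sqrt{n_\phi}$, so nothing further is needed.
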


\begin{proof}
For $x=\left[ x_1^\top,\dots ,\, x_{L}^\top\right]\in \mathbb{C}^{Ln_\mathrm{u}}$, we have
\begin{align*}
\|\tilde{V}\|=\sup_{\|x\|\leq 1} \|\tilde{V}x\| &=\sup_{\|x\|\leq 1} \|\tilde{V}_1 x_1+\cdots+\tilde{V}_{L} x_{L}\|.
\end{align*}
By using the triangle inequality, we have
\begin{align*}
\sup_{\|x\|\leq 1} \|\tilde{V}_1 x_1+\cdots+\tilde{V}_{L} x_{L}\| \leq \sup_{\|x\| \leq 1} \sum_{i=1}^{L}  \|\tilde{V}_i x_i\|.
\end{align*}
Finally, by using the sub-multiplicativity property, and subsequently, the Cauchy-Schwarz inequality, we have
\begin{align*}
\sup_{\|x\| \leq 1} \sum_{i=1}^{L}  \|\tilde{V}_i x_i\| \leq  \overline{\gamma}_\mathrm{v} \sup_{\|x\| \leq 1} \sum_{i=1}^{L} \|x_i\| = \overline{\gamma}_\mathrm{v} \sqrt{L}. 
\end{align*}\qedhere
\end{proof}

The matrices $\tilde{V}_i$ are components of the transfer matrix of the following open loop system evaluated at frequencies $\omega_i \in \Omega_T,\, i=1,\dots,L$:
\begin{equation}\label{eq:open_loop_new}
\begin{split}
\xi_{k+1}&=\begin{bmatrix}
\hat{A}_0 & 0\\0 & A_\mathrm{tr}
\end{bmatrix} \xi_k + \begin{bmatrix}
\hat{B}_0 \\ B_\mathrm{tr}
\end{bmatrix} u_k \\
z_k & = \begin{bmatrix}
I & -I\\0 & 0
\end{bmatrix}\xi_k + \begin{bmatrix}
0\\I
\end{bmatrix}u_k
\end{split}
\end{equation}
where $\xi_k=\begin{bmatrix}\hat{x}_k\\x_k\end{bmatrix}$. The prior uncertainty block is denoted as
\begin{align}
\Delta_0=\begin{bmatrix}A_\mathrm{tr}-\hat{A}_0 & B_\mathrm{tr}-\hat{B}_0
\end{bmatrix}.
\end{align}
Given $\Delta_0^\top \Delta_0 \preceq D_0^{-1}$ (cf. Assumption \ref{a1} and Lemma \ref{lem1}), we can compute $\overline{\gamma}_\mathrm{v}$ such that $\|\tilde{V}_i\| \leq \overline{\gamma}_\mathrm{v}$ using the (robust) $H_\infty$-norm, which is equal to the $\ell_2$-gain of the system \eqref{eq:open_loop_new}. The $\ell_2$-gain for the channel $u \rightarrow z$ is guaranteed to be robustly smaller than $\overline{\gamma}_\mathrm{v}$ if, for $P_\mathrm{p}=\begin{pmatrix}
 -\overline{\gamma}_\mathrm{v} I & 0\\0 & \frac{1}{\overline{\gamma}_\mathrm{v}}I
\end{pmatrix}$, the following inequality holds for some $\epsilon>0$ (cf. \cite{scherer2000linear}):
\begin{align}
\sum_{k=0}^{\infty}\begin{pmatrix}u_k\\z_k\end{pmatrix}^\top P_\mathrm{p} \begin{pmatrix}u_k\\z_k\end{pmatrix}\leq -\epsilon \sum_{k=0}^{\infty}u_k^\top u_k.
\end{align}

 Accounting for the prior uncertainty bound and re-writing \eqref{eq:open_loop_new} in the standard form, we get
\begin{equation}\label{eq:standard_openloop}
\begin{gathered}
\begin{bmatrix}
\xi_{k+1}\\z_k^u\\z_k
\end{bmatrix}
=
\begin{bmatrix}
\begin{bmatrix}\hat{A}_0 & 0\\ 0 & \hat{A}_0 \end{bmatrix} & \begin{bmatrix}0 \\ I \end{bmatrix} & \begin{bmatrix} \hat{B}_0 \\ \hat{B}_0 \end{bmatrix} \\
\begin{bmatrix}
0 & I\\ 0 & 0
\end{bmatrix}& 0 & \begin{bmatrix}
0 \\ I
\end{bmatrix}  \\
\begin{bmatrix}
I & -I\\0 & 0
\end{bmatrix} & 0 & \begin{bmatrix} 0\\1\end{bmatrix}
\end{bmatrix} \begin{bmatrix}
\xi_{k}\\w_k^u\\u_k
\end{bmatrix},\\
w_k^\mathrm{u} = \Delta_0 z_k^\mathrm{u}.
\end{gathered}
\end{equation}
Given \eqref{eq:standard_openloop}, $\overline{\gamma}_\mathrm{v}$ is a valid $H_\infty$-norm bound if there exists a matrix $N$ and a scalar $\lambda_\mathrm{v}>0$ such that
\begin{scriptsize}
\begin{align}\label{eq:openloop_Hinf_Schur}
\begin{bmatrix}
\begin{array}{c|c}
\begin{bmatrix}
-N & 0 & 0\\0 & -\lambda_\mathrm{v} I & 0\\0 & 0 & -\overline{\gamma}_\mathrm{v} I
\end{bmatrix} & \star \\ \hline
\begin{bmatrix}
\begin{bmatrix}\hat{A}_0  & 0\\ 0 & \hat{A}_0  \end{bmatrix}N & \begin{bmatrix}0 \\ I \end{bmatrix} & \begin{bmatrix} \hat{B}_0 \\ \hat{B}_0 \end{bmatrix} \\
\begin{bmatrix}0 & I \\ 0 & 0\end{bmatrix}N & 0 & \begin{bmatrix}0 \\ I \end{bmatrix}  \\
\begin{bmatrix}
I & -I\\0 & 0
\end{bmatrix}N & 0 & \begin{bmatrix} 0\\I\end{bmatrix}
\end{bmatrix} & \begin{bmatrix}
-N & 0 & 0\\0 & -\frac{1}{\lambda_\mathrm{v}}D_0 & 0\\0 & 0 & -\overline{\gamma}_\mathrm{v} I
\end{bmatrix}
\end{array}
\end{bmatrix} \prec 0.
\nonumber\\\text{ }
\end{align}
\end{scriptsize}
By defining $X=N^{-1}$ and multiplying the Schur complement of  \eqref{eq:openloop_Hinf_Schur} from left and right by $\text{diag}(N^{-1},I,I)$, we get:
\begin{align}\label{eq:openloop_Hinf_v}
\begin{split}
\left[\begin{array}{c}
*\\ *\\ \hline
*\\ *\\ \hline
*\\ *
\end{array}\right]^\top & \left[ \begin{array}{c|c|c}
\begin{matrix}-X & 0 \\ 0 & X \end{matrix} &
\begin{matrix}0&0\\0&0\end{matrix}&
\begin{matrix}0&0\\0&0\end{matrix}\\ \hline
\begin{matrix}0&0\\0&0\end{matrix}& \lambda_\mathrm{v} P_\mathrm{u} & \begin{matrix}0&0\\0&0\end{matrix}\\ \hline
\begin{matrix}0&0\\0&0\end{matrix}&
\begin{matrix}0&0\\0&0\end{matrix}&
P_\mathrm{p}
\end{array}\right]\\ & \times \left[\begin{array}{ccc}
I & 0 & 0\\
\begin{bmatrix}\hat{A}_0 & 0\\ 0 & \hat{A}_0 \end{bmatrix} & \begin{bmatrix}0 \\ 1 \end{bmatrix} & \begin{bmatrix} \hat{B}_0 \\ \hat{B}_0 \end{bmatrix}\\ \hline
0 & I & 0\\
\begin{bmatrix}
0 & I \\ 0 & 0
\end{bmatrix}& 0 & \begin{bmatrix}
0 \\ I
\end{bmatrix} \\ \hline
0 & 0 & I\\
\begin{bmatrix}
I & -I\\0 & 0
\end{bmatrix} & 0 & \begin{bmatrix} 0\\I\end{bmatrix}
\end{array} \right] \prec 0,
\end{split}
\end{align}
where $P_\mathrm{u}=\begin{bmatrix}-I&0\\0&D_0^{-1}\end{bmatrix}$ and $P_\mathrm{p}$ is of the form given earlier. Using \cite{scherer2001lpv}, $\overline{\gamma}_\mathrm{v}$ is a valid $H_\infty$-norm bound if there exists a positive definite matrix $X=X^\top \succ 0$ satisfying \eqref{eq:openloop_Hinf_v}. $N \succ 0$ follows from \eqref{eq:openloop_Hinf_v} with $N=X^{-1} \succ 0$.

A solution of  \eqref{eq:openloop_Hinf_Schur} gives $\overline{\gamma}_\mathrm{v}$ from which we can compute the bound 
\begin{align}
\|\tilde{V}\| \leq \overline{\gamma}_\mathrm{v}\sqrt{L}.
\end{align}
Hence,
\begin{align}
    \tilde{V} \tilde{V}^\mathsf{H} \preceq \Gamma_\mathrm{v} := \overline{\gamma}_\mathrm{v}^2 L I 
\end{align}
holds with probability at least $1-\delta$ since the bound on the prior uncertainty block holds with probability $1-\delta$, i.e. $\mathbb{P}(\Delta_0^\top \Delta_0 \preceq D_0^{-1}) = 1-\delta$ (cf. Assumption \ref{a1} and Lemma \ref{lem1}).

\section{Bound on $\|Y_\mathrm{tr}\|$}\label{appendix:gamma_y}
A bound on $\|Y_\mathrm{tr}\|$ can be derived in terms of the bounds on its sub-matrices as in Proposition \ref{prop:sub_matrix_norm} (cf. Appendix \ref{appendix:gamma}). Matrices $\|Y_i \|$ are the transfer matrix of the following open loop system evaluated at frequencies $\omega_i \in \Omega_T,\, i=1,\dots,L$:
\begin{align}\label{eq:open_loop_y}
\nonumber
x_{k+1}&=A_\mathrm{tr}x_k+w_k\\
z_k&=x_k.
\end{align}
Hence, the bound $\overline{\gamma}_\mathrm{y}$ is the $H_\infty$ norm of its transfer matrix, which is equal to the $\ell_2$-gain of the system. The $\ell_2$-gain for the channel $w \rightarrow z$ is guaranteed to be robustly smaller than $\overline{\gamma}_\mathrm{y}$ if, for $P_p=\begin{pmatrix}
-\overline{\gamma}_\mathrm{y} I & 0\\0 & \frac{1}{\overline{\gamma}_\mathrm{y}}I
\end{pmatrix}$, the following inequality holds for some $\epsilon>0$ (cf. \cite{scherer2000linear}):
\begin{align}
\sum_{k=0}^{\infty}\begin{pmatrix}w_k\\z_k\end{pmatrix}^\top P_\mathrm{p} \begin{pmatrix}w_k\\z_k\end{pmatrix}\leq -\epsilon \sum_{k=0}^{\infty}w_k^\top w_k.
\end{align}
Since $A_\mathrm{tr}$ is unknown, we first re-write \eqref{eq:open_loop_y} as
\begin{align}\label{eq:open_loop_y_new}
\nonumber
x_{k+1}&=\hat{A}_0x_k+(A_\mathrm{tr}-\hat{A}_0)x_k +w_k\\
z_k&=x_k.
\end{align}
The prior uncertainty block is denoted as
\begin{equation}
\Delta_0=\begin{bmatrix}
A_\mathrm{tr}-\hat{A}_0 & B_\mathrm{tr}-\hat{B}_0
\end{bmatrix}.
\end{equation}
From Assumption \ref{a1} and Lemma \ref{lem1}, we have $\mathbb{P}(\Delta_0^\top \Delta_0 \preceq D_0^{-1}) = 1-\delta$. Accounting for the prior uncertainty bound and re-writing \eqref{eq:open_loop_y_new} in the standard form, we get
\begin{equation}\label{eq:standard_openloop_y}
\begin{gathered}
\begin{bmatrix}
x_{k+1}\\z_k^u\\z_k
\end{bmatrix}
=
\begin{bmatrix}
\hat{A}_0 & I & I \\
\begin{bmatrix}
I\\ 0 
\end{bmatrix}& 0 & 0  \\
I & 0 & 0
\end{bmatrix} \begin{bmatrix}
x_{k}\\w_k^u\\w_k
\end{bmatrix},\\
w_k^\mathrm{u} = \Delta_0 z_k^\mathrm{u}.
\end{gathered}
\end{equation}

Given \eqref{eq:standard_openloop_y}, $\overline{\gamma}_\mathrm{y}$ is a valid $H_\infty$-norm bound if there exists a matrix $N$ and a scalar $\lambda_\mathrm{y}>0$ such that
\begin{footnotesize}
\begin{align}\label{eq:openloop_Hinf_Schur_y}
\begin{bmatrix}
\begin{array}{c|c}
\begin{bmatrix}
-N & 0 & 0\\0 & -\lambda_\mathrm{y} I & 0\\0 & 0 & -\overline{\gamma}_\mathrm{v} I
\end{bmatrix} & \star \\ \hline
\begin{bmatrix}
\hat{A}_0 N & I & I \\
\begin{bmatrix}
N\\ 0 
\end{bmatrix}& 0 & 0  \\
N & 0 & 0
\end{bmatrix} & \begin{bmatrix}
-N & 0 & 0\\0 & -\frac{1}{\lambda_\mathrm{y}} D_0 & 0\\0 & 0 & -\overline{\gamma}_\mathrm{v} I
\end{bmatrix}
\end{array}
\end{bmatrix} \prec 0.
\end{align}
\end{footnotesize}
By defining $X=N^{-1}$ and multiplying the Schur complement of  \eqref{eq:openloop_Hinf_Schur_y} from left and right by $\text{diag}(N^{-1},I,I)$, we get
\begin{align}\label{eq:openloop_Hinf_y}
\left[\begin{array}{c}
*\\ *\\ \hline
*\\ *\\ \hline
*\\ *
\end{array}\right]^\top & \left[ \begin{array}{c|c|c}
\begin{matrix}-X & 0 \\ 0 & X \end{matrix} &
\begin{matrix}0&0\\0&0\end{matrix}&
\begin{matrix}0&0\\0&0\end{matrix}\\ \hline
\begin{matrix}0&0\\0&0\end{matrix}& \lambda_\mathrm{y} P_\mathrm{u} & \begin{matrix}0&0\\0&0\end{matrix}\\ \hline
\begin{matrix}0&0\\0&0\end{matrix}&
\begin{matrix}0&0\\0&0\end{matrix}&
P_\mathrm{p}
\end{array}\right] \left[\begin{array}{ccc}
I & 0 & 0\\
\hat{A}_0 & I & I\\ \hline
0 & I & 0\\
\begin{bmatrix}
I \\ 0
\end{bmatrix}& 0 & 0 \\ \hline
0 & 0 & I\\
I & 0 & 0
\end{array} \right]\prec 0
\end{align}
where $P_\mathrm{u}=\begin{bmatrix}-I&0\\0&D_0^{-1}\end{bmatrix}$ and $P_\mathrm{p}$ is of the form given earlier. Using \cite{scherer2001lpv}, $\overline{\gamma}_\mathrm{y}$ is a valid $H_\infty$-norm bound if there exists a positive definite matrix $X=X^\top \succ 0$ satisfying \eqref{eq:openloop_Hinf_y}. $N\succ 0$ follows from \eqref{eq:openloop_Hinf_y} with $N=X^{-1} \succ 0$.

A solution of  \eqref{eq:openloop_Hinf_Schur_y} gives $\overline{\gamma}_\mathrm{y}$, from which we can compute the bound 
\begin{align}
\|\tilde{Y}\| \leq \gamma_\mathrm{y}:=\overline{\gamma}_\mathrm{y}\sqrt{L}
\end{align}
which holds with probability at least $1-\delta$.

\section{Sample-based constants}\label{appendix:sample}
Bounds $\Gamma_\mathrm{v}$ and $\gamma_\mathrm{y}$ in \eqref{eq:tf_prop} are derived in Appendices \ref{appendix:gamma}-\ref{appendix:gamma_y}. However, these bounds can be conservative, and in what follows, we propose sample-based bounds that can be determined using the `scenario' approach \cite{campi2009scenario}.
\subsubsection*{Scenario approach to estimate $\Gamma_\mathrm{v}$} The bound $\Gamma_\mathrm{v}$ on the uncertain term $\tilde{V} \tilde{V}^\mathsf{H}$ in Appendix \ref{appendix:gamma} is computed under the assumption that $\theta_\mathrm{tr} \in \mathbf{\Theta}_0$. A tighter upper bound $\Gamma_\mathrm{v}$ can be computed by directly estimating an upper bound on $\tilde{V} \tilde{V}^\mathsf{H}$. In order to estimate $\Gamma_\mathrm{v}$, we generate $N_\mathrm{s}$ samples of $\mathrm{vec}(A_i,B_i)=\theta_i \in \mathbf{\Theta}_0$, $i=1,\dots, N_\mathrm{s}$, from the multivariate normal distribution with mean $\hat{\theta}_\mathrm{prior}$ and covariance $\Sigma_{\theta,\mathrm{prior}}$ (cf. Assumption \ref{a1}). Given a probability of violation $\delta$, and the number of uncertain decision variables $d=\frac{n_\phi (n_\phi+1)}{2}$, a lower bound on the number of samples $N_\mathrm{s}$ required to estimate $\Gamma_\mathrm{v}$ with confidence $1-\beta$, is given as \cite{campi2009scenario}:
\begin{align}\label{eq:Ns}
N_\mathrm{s} \geq \frac{2}{\delta}\left(\ln \frac{1}{\beta}+d \right).
\end{align}
For each $\theta_i=\mathrm{vec}(A_i,B_i)$, we evaluate $\tilde{V}_i=\bar{V}_i-\hat{V}$ where the transfer matrix $\bar{V}_i=[\bar{V}_{i,1},\cdots,\bar{V}_{i,L}]$ is computed with $A_i,\,B_i$ (cf. \eqref{eq:tf}). The bound $\Gamma_\mathrm{v}$ can be computed by solving the following SDP:
\begin{align}
\nonumber
\underset{\Gamma_\mathrm{v}}{\min}  &\quad \text{trace}(\Gamma_\mathrm{v}) \\
\text{s.t. }& \quad \Gamma_\mathrm{v} \succeq \tilde{V}_i \tilde{V}_i^\mathsf{H},\;i=1,...,N_\mathrm{s}.
\end{align}

\subsubsection*{Scenario approach to estimate $\gamma_\mathrm{y}$}
Similar to $\Gamma_\mathrm{v}$, the bound $\gamma_\mathrm{y}$ on the uncertain term $\|Y_\mathrm{tr}\|$ in \eqref{eq:tf_prop} is computed under the assumption that $\theta_\mathrm{tr} \in \mathbf{\Theta}_0$ in Appendix \ref{appendix:gamma_y}. A tighter upper bound $\gamma_\mathrm{y}$ on $\|Y_\mathrm{tr}\|$ can be computed by directly estimating an upper bound on $\|Y_\mathrm{tr}\|$.
To this end, given a probability of violation $\delta$, we generate $N_\mathrm{s}$ samples of $\mathrm{vec}(A_i,B_i)=\theta_i~\in~\mathbf{\Theta}_0$, $i=1,\dots, N_\mathrm{s}$, from the multivariate normal distribution with mean $\hat{\theta}_\mathrm{prior}$ and covariance $\Sigma_{\theta,\mathrm{prior}}$ according to \eqref{eq:Ns} with $d=1$. For each $A_i$, we evaluate $\bar{Y}_i=[\bar{Y}_{i,1},\cdots,\bar{Y}_{i,L}]$ (cf. \eqref{eq:tf}). The bound $\gamma_\mathrm{y}$ can be computed as 
\begin{align}
\gamma_\mathrm{y}=\max_{i=1,\dots, N_\mathrm{s} }\| \bar{Y}_i \|.
\end{align}

The constants $\Gamma_\mathrm{v}$ and $\gamma_\mathrm{y}$ computed via the scenario approach ensure the same properties described in Sections \ref{subsec:hinf_bound}, and hold jointly with confidence $1-2\beta$.

\section{Variance of $\bar{w}(\omega_i)$}
\label{appendix:noise_radius}
By Parseval's theorem, we have
\begin{equation}
\sum_{k=0}^{k=T-1} w_k w_k^\top = \frac{1}{T}\sum_{i=0}^{N-1} \mathbf{w}(e^{j\omega_i}) \mathbf{w}(e^{j\omega_i})^\top
\end{equation}
where $\omega_i=\frac{i}{T},\,i=0,\dots,T-1$. Given that $\bar{w}(\omega_i)=\frac{\mathbf{w}(e^{j\omega_i})}{T}$, we have
\begin{equation*}
\sum_{k=0}^{k=T-1} w_k w_k^\top = T\sum_{i=0}^{N-1} \bar{w}(\omega_i) \bar{w}(\omega_i)^\top.
\end{equation*}
From this, we get $\sigma_\mathrm{w}^2 I=T\sigma_{\bar{\mathrm{w}}}^2 I$. Therefore, the variance of $\bar{w}(\omega_i)$ is given by
\begin{equation}
\sigma_{\bar{\mathrm{w}}}^2 I =\frac{\sigma_\mathrm{w}^2}{T}I.
\end{equation}


\section{Proof of Lemma \ref{lemma:robust}}
\label{appendix:robust_lemma}
\begin{proof}
The proof follows the arguments in \cite[Prop. 3.13, Sec. 4.2.5]{scherer2000linear} and \cite{scherer2001lpv} for LPV control. First, note that the set definitions $\mathbf{\Delta}_\mathrm{s},\mathbf{\Delta}_\mathrm{u}$ can be equivalently represented as $\Delta_\mathrm{s}\in\mathbf{\Delta}_\mathrm{s}:=\{\Delta: \lambda_\mathrm{s} R_\mathrm{s}-\lambda_\mathrm{s}\Delta^\top  \Delta \succ 0\}$ and $\Delta_\mathrm{u}~\in~\mathbf{\Delta}_\mathrm{u}~:=~\{\Delta:~\lambda_\mathrm{u} R_\mathrm{u}-\lambda_\mathrm{u}\Delta^\top \Delta  \succ 0\}$ with arbitrary positive scalars $\lambda_\mathrm{s}, \lambda_\mathrm{u}>0$. Define $X=N^{-1}$ and $K_\mathrm{x}=M N^{-1}$.
Multiplying the Schur complement of inequality \eqref{eq:LMI_gs1} by $\text{diag}(N^{-1},I,I,I)$ and inequality \eqref{eq:LMI_gs2} by $\text{diag}(N^{-1},I)$, respectively, from the left and the right yields

\begin{small}
\begin{subequations}\label{eq:h2_biginequality}
\begin{align}\label{eq:biginequality}
\nonumber
\left[ \begin{array}{c}
*\\ *\\ \hline
*\\ *\\ \hline
*\\ *\\ \hline
*
\end{array}\right]^\top &
\left[
\begin{array}{c|c|c|c}
\begin{matrix} -X&0\\0&X \end{matrix} & 
\begin{matrix} 0&0\\0&0\end{matrix} &
\begin{matrix} 0&0\\0&0 \end{matrix} &
\begin{matrix} 0\\0 \end{matrix} \\
\hline
\begin{matrix} 0&0\\0&0 \end{matrix} &
\begin{matrix} \lambda_\mathrm{s} P_\mathrm{s} \end{matrix} & 
\begin{matrix} 0&0\\0&0 \end{matrix} &
\begin{matrix} 0\\0 \end{matrix} \\
\hline
\begin{matrix} 0&0\\0&0 \end{matrix} &
\begin{matrix} 0&0\\0&0 \end{matrix} &
\begin{matrix} \lambda_\mathrm{u} P_\mathrm{u} \end{matrix} & 
\begin{matrix} 0\\0 \end{matrix} \\
\hline
\begin{matrix} 0&0 \end{matrix} &
\begin{matrix} 0&0 \end{matrix} &
\begin{matrix} 0&0  \end{matrix} &
\begin{matrix} -\gamma_\mathrm{p} I \end{matrix}
\end{array}
\right] \\
\times & \left[ \begin{array}{cccc}
I&0&0&0 \\
\hat{A}_0+\hat{B}_0 K_\mathrm{x}& I+ \hat{B}_0 K_\mathrm{s}&I&I \\
\hline
0&I&0&0\\
\begin{bmatrix} I\\K_\mathrm{x} \end{bmatrix} & \begin{bmatrix} 0\\K_\mathrm{s} \end{bmatrix} & 0 & 0 \\
\hline
0&0&I&0 \\
 \begin{bmatrix} I\\K_\mathrm{x} \end{bmatrix} & \begin{bmatrix} 0\\K_\mathrm{s} \end{bmatrix} & 0 & 0\\
\hline
0&0&0&I
\end{array} \right] \prec 0,
\end{align}
\begin{align}
\begin{bmatrix}
X & C^\top\\ C & Z
\end{bmatrix} \succ 0, 
\end{align}
\begin{align}
    \text{trace}(Z) \leq \gamma_\mathrm{p},
\end{align}
\end{subequations}
\end{small}

where $P_\mathrm{s}=\begin{bmatrix}-I&0\\0&R_\mathrm{s}\end{bmatrix}$ and $P_\mathrm{u}=\begin{bmatrix}-I&0\\0&R_\mathrm{u}
\end{bmatrix}$. Using \cite{scherer2000linear, scherer2001lpv}, $H_2$ performance is guaranteed if there exists a positive definite matrix $X=X^\top \succ 0$ and $Z$ satisfying the matrix inequalities \eqref{eq:h2_biginequality}.  $N \succ 0$ follows from \eqref{eq:h2_biginequality} with $N~=~X^{-1}~\succ~0$.\qedhere
\end{proof}
\section*{Acknowledgment}
The authors would like to thank Xavier Bombois for valuable feedback regarding the interpretation of classical identification bounds.
%
%
\bibliographystyle{ieeetr}
\bibliography{lit}  

\begin{thebibliography}{10}

\bibitem{recht2019tour}
B.~Recht, ``A tour of reinforcement learning: The view from continuous
  control,'' {\em Annual Review of Control, Robotics, and Autonomous Systems},
  vol.~2, pp.~253--279, 2019.

\bibitem{feldbaum1960dual}
A.~A. Feldbaum, ``Dual control theory,'' {\em Automation and Remote Control},
  vol.~21, no.~9, pp.~874--1039, 1960.

\bibitem{filatov2000survey}
N.~M. Filatov and H.~Unbehauen, ``Survey of adaptive dual control methods,''
  {\em IEE Proceedings-Control Theory and Applications}, vol.~147, no.~1,
  pp.~118--128, 2000.

\bibitem{mesbah2018stochastic}
A.~Mesbah, ``Stochastic model predictive control with active uncertainty
  learning: a survey on dual control,'' {\em Annual Reviews in Control},
  vol.~45, pp.~107--117, 2018.

\bibitem{tse1973wide}
E.~Tse, Y.~Bar-Shalom, and L.~Meier, ``Wide-sense adaptive dual control for
  nonlinear stochastic systems,'' {\em IEEE Transactions on Automatic Control},
  vol.~18, no.~2, pp.~98--108, 1973.

\bibitem{tse1976actively}
E.~Tse and Y.~Bar-Shalom, ``Actively adaptive control for nonlinear stochastic
  systems,'' {\em Proceedings of the IEEE}, vol.~64, no.~8, pp.~1172--1181,
  1976.

\bibitem{bayard1985implicit}
D.~S. Bayard and M.~Eslami, ``Implicit dual control for general stochastic
  systems,'' {\em Optimal Control Applications and Methods}, vol.~6, no.~3,
  pp.~265--279, 1985.

\bibitem{bar1976caution}
Y.~Bar-Shalom and E.~Tse, ``Caution, probing, and the value of information in
  the control of uncertain systems,'' in {\em Annals of Economic and Social
  Measurement, Volume 5, number 3}, pp.~323--337, NBER, 1976.

\bibitem{wittenmark1995adaptive}
B.~Wittenmark, ``Adaptive dual control methods: An overview,'' in {\em Adaptive
  Systems in Control and Signal Processing 1995}, pp.~67--72, Elsevier, 1995.

\bibitem{geversaa2005identification}
M.~Gevers, ``Identification for control: From the early achievements to the
  revival of experiment design,'' {\em European journal of control}, vol.~11,
  pp.~1--18, 2005.

\bibitem{hjalmarsson2005experiment}
H.~Hjalmarsson, ``From experiment design to closed-loop control,'' {\em
  Automatica}, vol.~41, no.~3, pp.~393--438, 2005.

\bibitem{annergren2017application}
M.~Annergren, C.~A. Larsson, H.~Hjalmarsson, X.~Bombois, and B.~Wahlberg,
  ``Application-oriented input design in system identification: Optimal input
  design for controls,'' {\em IEEE Control Systems Magazine}, vol.~37, no.~2,
  pp.~31--56, 2017.

\bibitem{larsson2016application}
C.~A. Larsson, A.~Ebadat, C.~R. Rojas, X.~Bombois, and H.~Hjalmarsson, ``An
  application-oriented approach to dual control with excitation for closed-loop
  identification,'' {\em European Journal of Control}, vol.~29, pp.~1--16,
  2016.

\bibitem{heirung2017dual}
T.~A.~N. Heirung, B.~E. Ydstie, and B.~Foss, ``Dual adaptive model predictive
  control,'' {\em Automatica}, vol.~80, pp.~340--348, 2017.

\bibitem{barenthin2008identification}
M.~Barenthin and H.~Hjalmarsson, ``Identification and control: Joint input
  design and {$H_\infty$} state feedback with ellipsoidal parametric
  uncertainty via lmis,'' {\em Automatica}, vol.~44, no.~2, pp.~543--551, 2008.

\bibitem{bombois2006least}
X.~Bombois, G.~Scorletti, M.~Gevers, P.~M. Van~den Hof, and R.~Hildebrand,
  ``Least costly identification experiment for control,'' {\em Automatica},
  vol.~42, no.~10, pp.~1651--1662, 2006.

\bibitem{bombois2021robust}
X.~Bombois, F.~Morelli, H.~Hjalmarsson, L.~Bako, and K.~Colin, ``Robust optimal
  identification experiment design for multisine excitation,'' {\em
  Automatica}, vol.~125, p.~109431, 2021.

\bibitem{umenberger2019robust}
J.~Umenberger, M.~Ferizbegovic, T.~B. Sch{\"o}n, and H.~Hjalmarsson, ``Robust
  exploration in linear quadratic reinforcement learning,'' in {\em Advances in
  Neural Information Processing Systems}, pp.~15310--15320, 2019.

\bibitem{ferizbegovic2019learning}
M.~Ferizbegovic, J.~Umenberger, H.~Hjalmarsson, and T.~B. Sch{\"o}n, ``Learning
  robust {LQ}-controllers using application oriented exploration,'' {\em IEEE
  Control Systems Letters}, vol.~4, no.~1, pp.~19--24, 2019.

\bibitem{iannelli2019structured}
A.~Iannelli, M.~Khosravi, and R.~S. Smith, ``Structured exploration in the
  finite horizon linear quadratic dual control problem,'' in {\em Proc. 21st
  IFAC World Congress}, pp.~959--964, 2020.

\bibitem{dean2017sample}
S.~Dean, H.~Mania, N.~Matni, B.~Recht, and S.~Tu, ``On the sample complexity of
  the linear quadratic regulator,'' {\em Foundations of Computational
  Mathematics}, pp.~1--47, 2019.

\bibitem{venkatasubramanian2020robust}
J.~Venkatasubramanian, J.~K{\"o}hler, J.~Berberich, and F.~Allg{\"o}wer,
  ``Robust dual control based on gain scheduling,'' in {\em Proc. 59th IEEE
  Conference on Decision and Control (CDC)}, pp.~2270--2277, IEEE, 2020.

\bibitem{rojas2007robust}
C.~R. Rojas, J.~S. Welsh, G.~C. Goodwin, and A.~Feuer, ``Robust optimal
  experiment design for system identification,'' {\em Automatica}, vol.~43,
  no.~6, pp.~993--1008, 2007.

\bibitem{sarker2020parameter}
A.~Sarker, P.~Fisher, J.~E. Gaudio, and A.~M. Annaswamy, ``Accurate parameter
  estimation for safety-critical systems with unmodeled dynamics,'' {\em
  Artificial Intelligence}, vol.~316, p.~103857, 2023.

\bibitem{rigollet2015high}
P.~Rigollet and J.-C. H{\"u}tter, ``High dimensional statistics,'' {\em Lecture
  notes for course 18S997}, vol.~813, p.~814, 2015.

\bibitem{ljung1999sysid}
L.~Ljung, {\em System Identification: Theory for the User}.
\newblock Prentice Hall PTR, 2nd~ed., 1999.

\bibitem{scherer2001lpv}
C.~W. Scherer, ``{LPV} control and full block multipliers,'' {\em Automatica},
  vol.~37, no.~3, pp.~361--375, 2001.

\bibitem{veenman2014synthesis}
J.~Veenman and C.~W. Scherer, ``A synthesis framework for robust
  gain-scheduling controllers,'' {\em Automatica}, vol.~50, no.~11,
  pp.~2799--2812, 2014.

\bibitem{paganini2000linear}
F.~Paganini and E.~Feron, ``Linear matrix inequality methods for robust {$H_2$}
  analysis: A survey with comparisons,'' in {\em Advances in linear matrix
  inequality methods in control}, pp.~129--151, SIAM, 2000.

\bibitem{sarker2023corrig}
A.~Sarker, P.~Fisher, J.~E. Gaudio, and A.~M. Annaswamy, ``Corrigendum to
  “accurate parameter estimation for safety-critical systems with unmodeled
  dynamics” [artif. intell. 316 (2023) 103857],'' {\em Artif. Intell.},
  vol.~317, apr 2023.

\bibitem{caverly2019lmi}
R.~J. Caverly and J.~R. Forbes, ``{LMI} properties and applications in systems,
  stability, and control theory,'' {\em arXiv preprint arXiv:1903.08599}, 2019.

\bibitem{boyd2004convex}
S.~P. Boyd and L.~Vandenberghe, {\em Convex optimization}.
\newblock Cambridge university press, 2004.

\bibitem{vanwaarde2022noisy}
H.~J. van Waarde, M.~K. Camlibel, and M.~Mesbahi, ``From noisy data to feedback
  controllers: Nonconservative design via a matrix s-lemma,'' {\em IEEE
  Transactions on Automatic Control}, vol.~67, no.~1, pp.~162--175, 2022.

\bibitem{pronzato2008optimal}
L.~Pronzato, ``Optimal experimental design and some related control problems,''
  {\em Automatica}, vol.~44, no.~2, pp.~303--325, 2008.

\bibitem{dinh2005parameter}
M.~Dinh, G.~Scorletti, V.~Fromion, and E.~Magarotto, ``Parameter dependent
  ${H}_\infty$ control by finite dimensional {LMI} optimization: application to
  trade-off dependent control,'' {\em International Journal of Robust and
  Nonlinear Control: IFAC-Affiliated Journal}, vol.~15, no.~9, pp.~383--406,
  2005.

\bibitem{cvx}
M.~Grant and S.~Boyd, ``{CVX}: {MATLAB} software for disciplined convex
  programming, version 2.1,'' Mar. 2014.

\bibitem{tsiamis2021linear}
A.~Tsiamis and G.~J. Pappas, ``Linear systems can be hard to learn,'' in {\em
  Proc. 2021 60th IEEE Conference on Decision and Control (CDC)},
  pp.~2903--2910, IEEE, 2021.

\bibitem{scherer2000linear}
C.~Scherer and S.~Weiland, ``Linear matrix inequalities in control,'' {\em
  Lecture Notes, Dutch Institute for Systems and Control, Delft, The
  Netherlands}, vol.~3, 2000.

\bibitem{campi2009scenario}
M.~C. Campi, S.~Garatti, and M.~Prandini, ``The scenario approach for systems
  and control design,'' {\em Annual Reviews in Control}, vol.~33, no.~2,
  pp.~149--157, 2009.

\end{thebibliography}

\begin{IEEEbiography}[{\includegraphics[width=1in,height=1.25in,clip,keepaspectratio]{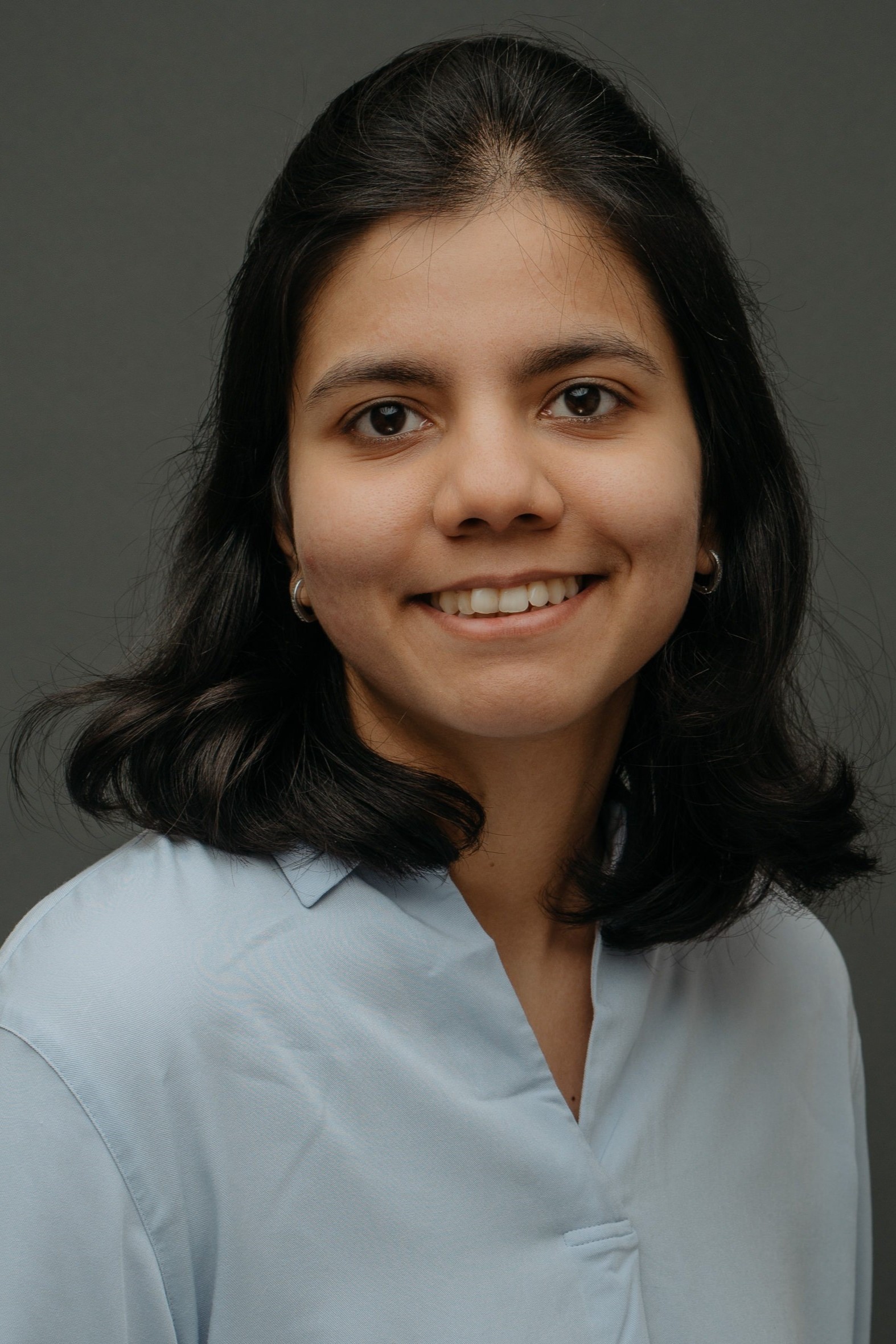}}]{Janani Venkatasubramanian} received her Master degree in Electrical Engineering from the Delft University of Technology, The Netherlands, in 2018. Since May 2019, she is a Ph.D. student at the Institute for Systems Theory and Automatic Control, University of Stuttgart under the supervision of Prof. Frank Allg\"ower, and a member of the International Max Planck Research School for Intelligent Systems (IMPRS-IS). Her research interests lie in the area of learning and adaptive control. \end{IEEEbiography}\vspace{-1 cm}

\begin{IEEEbiography}[{\includegraphics[width=1in,height=1.25in,clip,keepaspectratio]{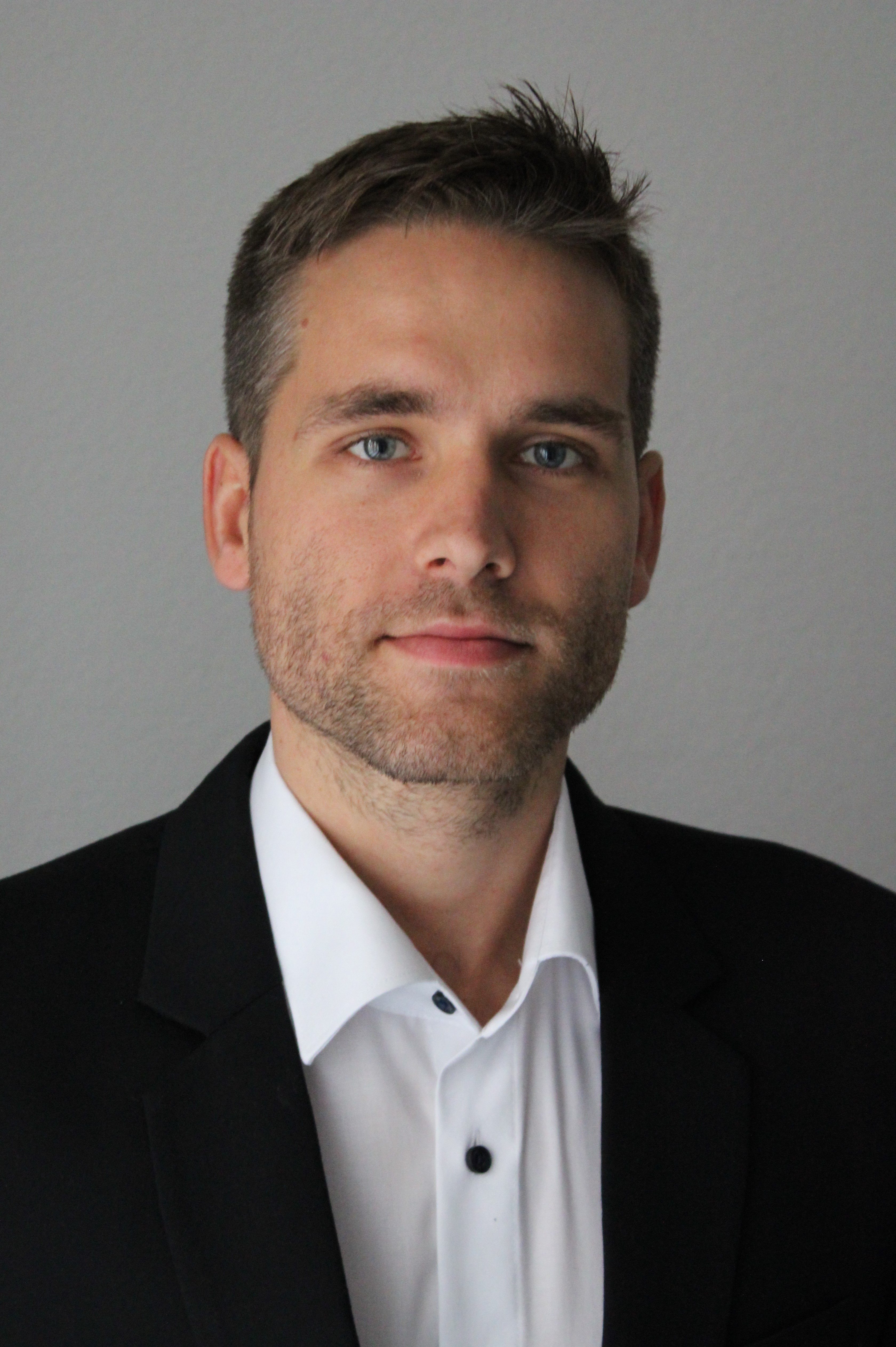}}]{Johannes K\"ohler} received his Master degree in Engineering Cybernetics from the University of Stuttgart, Germany, in 2017. In 2021, he obtained a Ph.D. in mechanical engineering, also from the University of Stuttgart, Germany, for which he received the 2021 European Systems \& Control Ph.D. award. He is currently a postdoctoral researcher at the Institute for Dynamic Systems and Control (IDSC) at ETH Zürich. His research interests are in the area of model predictive control and control and estimation for nonlinear uncertain systems. \end{IEEEbiography}\vspace{-1 cm}

\begin{IEEEbiography}[{\includegraphics[width=1in,height=1.25in,clip,keepaspectratio]{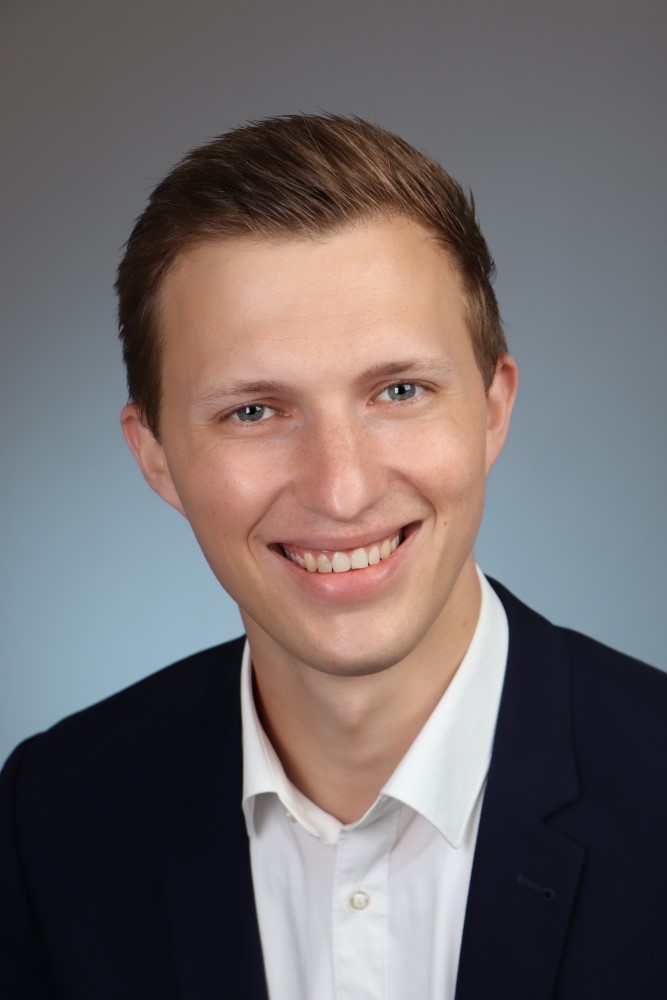}}]{Julian Berberich} received a Master’s degree in Engineering Cybernetics from the University of Stuttgart, Germany, in 2018. In 2022, he obtained a Ph.D. in Mechanical Engineering, also from the University of Stuttgart, Germany. He is currently working as a Lecturer (Akademischer Rat) at the Institute for Systems Theory and Automatic Control at the University of Stuttgart, Germany. In 2022, he was a visiting researcher at the ETH Zürich, Switzerland. He has received the Outstanding Student Paper Award at the 59th IEEE Conference on Decision and Control in 2020 and the 2022 George S. Axelby Outstanding Paper Award. His research interests include data-driven analysis and control as well as quantum computing. \end{IEEEbiography}\vspace{-1 cm} 

\begin{IEEEbiography}[{\includegraphics[width=1in,height=1.25in,clip,keepaspectratio]{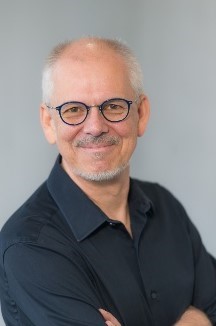}}]{Frank Allg\"ower} is professor of mechanical engineering at the University of Stuttgart, Germany, and Director of the Institute for Systems Theory and
Automatic Control (IST) there. Frank is active in serving the community in several roles: Among others he has been President of the International Federation of Automatic Control (IFAC) for the years 2017-2020, Vice-president for Technical Activities of the IEEE Control Systems Society for 2013/14, and Editor of the journal Automatica from 2001 until 2015. From 2012 until 2020 Frank served in addition as Vice-president for the German Research Foundation (DFG), which is Germany’s most important research funding organization. His research interests include predictive control, data-based control, networked control, cooperative control, and nonlinear control with application to a wide range of fields including systems biology. \end{IEEEbiography}

\end{document}